\documentclass[12pt,a4paper]{article}

\usepackage[T1]{fontenc}
\usepackage[utf8]{inputenc}
\usepackage{lmodern}
\usepackage{amsmath,amssymb,amsthm,mathtools}
\usepackage{geometry}
\usepackage{enumitem}
\usepackage{booktabs}
\usepackage{microtype}
\usepackage{xcolor}
\usepackage{hyperref}
\usepackage[nameinlink,capitalise]{cleveref}
\usepackage{tikz}
\usetikzlibrary{arrows.meta,positioning,calc,fit}

\hypersetup{
  colorlinks=true,
  linkcolor=blue!55!black,
  citecolor=green!45!black,
  urlcolor=blue!60!black,
  pdftitle={An Exponent-Tight Conditional Lower Bound for Global Label Min-Cut},
  pdfauthor={Yuanhao Wang}
}
\setlist[itemize]{leftmargin=2em,itemsep=0.25em,topsep=0.4em}
\setlist[enumerate]{leftmargin=2.2em,itemsep=0.25em,topsep=0.4em}
\allowdisplaybreaks

\newtheorem{theorem}{Theorem}[section]
\newtheorem{lemma}[theorem]{Lemma}
\newtheorem{proposition}[theorem]{Proposition}
\newtheorem{corollary}[theorem]{Corollary}

\theoremstyle{definition}
\newtheorem{definition}[theorem]{Definition}
\newtheorem{problem}[theorem]{Problem}
\theoremstyle{remark}
\newtheorem{remark}[theorem]{Remark}

\newcommand{\Part}{\operatorname{Part}}
\newcommand{\cc}{\operatorname{cc}}

\newcommand{\poly}{\operatorname{poly}}
\newcommand{\Yes}{\textnormal{\textsc{Yes}}}
\newcommand{\No}{\textnormal{\textsc{No}}}
\newcommand{\GLMC}{\textnormal{\textsc{Global Label Min-Cut}}}
\newcommand{\DGLMC}{\textnormal{\textsc{Dual Global Label Min-Cut}}}
\newcommand{\MCC}{\textnormal{\textsc{Multicolored Clique}}}

\newcommand{\one}{\mathbf{1}}
\newcommand{\zero}{\mathbf{0}}
\newcommand{\join}{\mathbin{\vee}}
\newcommand{\bigjoin}{\bigvee}
\newcommand{\restr}{\mathord{\upharpoonright}}

\title{\bfseries An Exponent-Tight Conditional Lower Bound for Global Label Min-Cut}
\author{\small Yuanhao Wang$\,^{\rm a}$\quad\quad Kehua Wang$\,^{\rm b}$ \quad\quad Wei Wang$\,^{\rm a}$\footnote{Corresponding author. Email address: wang\_weiw@163.com}\\
\small $^{\rm a}\,$School of Mathematics and Statistics, Xi'an Jiaotong University, Xi'an, 710049, P. R. China\\
\small $^{\rm b}\,$Department of Computer Science, Columbia University, New York, NY, 10027, USA}
\date{}

\begin{document}
\maketitle

\begin{abstract}
Let $n$ and $p$ denote the numbers of vertices and labels, respectively, in an undirected edge-labeled graph. Previous work showed that, under the Exponential Time Hypothesis (ETH), there is no deterministic algorithm with running time
\[
(np)^{o\left(\frac{\log n}{(\log\log n)^2}\right)}.
\]
In this paper, we give a deterministic reduction that strengthens this conditional running-time lower bound to
\[
  (np)^{o(\log n)}\operatorname{poly}(|E|).
\]
The lower bound holds even for simple edge-labeled graphs. Since our reduction is deterministic, the same lower bound applies to bounded-error randomized algorithms under the randomized Exponential Time Hypothesis. On the resulting hard family, $p=n^{O(1)}$. Thus, under rETH, the lower bound matches the exponent order of the known randomized quasi-polynomial exact upper bound up to constant factors.

\medskip
\noindent\textbf{Keywords:} Global Label Min-Cut; Hedge Cut; conditional lower bound; Exponential Time Hypothesis; randomized Exponential Time Hypothesis; partition lattice; Multicolored Clique
\end{abstract}

\section{Introduction}

\subsection{Background}
The classical global min-cut problem asks for a nontrivial vertex cut minimizing the number or total capacity of crossing edges, and it has a well-developed algorithmic theory~\cite{Karger1996,Karger2000,Stoer1997}. In many network-reliability models, however, edge failures are not independent: links supported by a common physical facility, shared conduit, software module, or risk resource may fail simultaneously. Shared risk link groups (SRLGs) were introduced to model such correlated failures~\cite{Coudert2007,Coudert2016}.

In the label-cut model, every edge carries a label, and all edges with the same label are treated as a single unit. Deleting one label removes all edges carrying that label and incurs unit cost. Given an undirected multigraph $G=(V,E)$ and a labeling function $\chi:E\to[p]$, for every nontrivial set $\varnothing\neq S\subsetneq V$, let
\[
  \delta(S)=\{uv\in E:u\in S,\ v\notin S\},
  \qquad
  \chi(\delta(S))=\{\chi(e):e\in\delta(S)\}.
\]
The objective of \GLMC{} is to minimize $|\chi(\delta(S))|$. The problem is also known as Global Hedge Min-Cut or Hedge Cut~\cite{Ghaffari2017,Jaffke2026,Fomin2025}. Hedge connectivity is closely related to hypergraph connectivity; sparsification, exact algorithms, and extensions of random contraction for hypergraph minimum cuts provide important background for this model~\cite{Chekuri2018,Fox2019}.

The local Minimum Label $s$--$t$ Cut problem has long been known to be substantially harder than the ordinary $s$--$t$ cut problem, and its complexity, approximation algorithms, and parameterized properties have been studied extensively~\cite{Zhang2011,TangZhang2012,Fellows2010,ZhangFu2016,ZhangFuTang2018}. The global version exhibits a different complexity landscape. Ghaffari, Karger, and Panigrahi systematically studied hedge connectivity and obtained a polynomial-time approximation scheme together with a randomized quasi-polynomial-time exact algorithm~\cite{Ghaffari2017}. Jaffke, de Lima, Masa\v{r}\'ik, Pilipczuk, and Souza~\cite{Jaffke2026} gave a reduction from Partitioned Subgraph Isomorphism (PSI) showing that, assuming ETH, no deterministic algorithm runs in time
\[
  (np)^{o\left(\frac{\log n}{(\log\log n)^2}\right)},
\]
and established W[1]-hardness when parameterized by the number of uncut labels. Their parameter analysis indicates that the loss in the final exponent stems both from the blow-up in the main reduction and from the parameter conversion in the source lower bound for PSI. Their fine-grained lower bound builds on Marx's framework for Subgraph Isomorphism~\cite{Marx2010}. They also observed that a main reduction starting directly from \MCC~\cite{Chen2006,Cygan2015} would interface more directly with a standard SAT grouping reduction~\cite{Jaffke2026}. Karthik et al.~\cite{Karthik2024} developed streamlined conditional lower bounds for sparse parameterized $2$-CSP and explicitly observed that combining their source lower bound with the reduction of Jaffke et al.~\cite{Jaffke2026} still yields a lower bound with exponent \(\frac{\log n}{(\log\log n)^2}.\) On the positive side, Fomin et al. proved that Hedge Cut is fixed-parameter tractable (FPT) when parameterized by the solution size and gave a running-time bound reflecting both quasi-polynomial-time solvability and fixed-parameter tractability~\cite{Fomin2025}. Taken together, these results show that the problem is unlikely to admit a polynomial-time algorithm and its complexity is not adequately captured by the coarse-grained picture of classical NP-hardness alone.

\subsection{Our Contributions}
The classical disjunctive graph product defines adjacency in the product graph by the logical disjunction of the coordinate-wise adjacency conditions~\cite{Hammack2011}. On the other hand, the join operation in a partition lattice captures the transitive closure obtained by combining equivalence relations~\cite{Birkhoff1967}. Motivated by these two structures, we develop a modular reduction in which selection validity and clique consistency are represented by partition recognizers and combined by a pointed disjunctive composition. The central new ingredient is an additive-shift exact-one verifier of size $q^{O(1)}2^{O(k)}$, which removes the two remaining $\log\log n$ factors in the exponent.

\begin{theorem}\label{thm:main}
Assuming ETH, there is no deterministic algorithm that solves \GLMC{} on an undirected edge-labeled graph with $n$ vertices and $p$ labels in time
\[
  (np)^{o(\log n)}\operatorname{poly}(|E|).
\]
The statement remains true even when the input graph is required to be simple.
\end{theorem}

The result strengthens the previously known
\[
    (np)^{o\left(\frac{\log n}{(\log\log n)^2}\right)}
\]
lower bound by eliminating both \(\log\log n\) factors. On the hard instances used in the proof, \(p=n^{O(1)}\). Consequently, \cref{thm:main} rules out deterministic \(n^{o(\log n)}\operatorname{poly}(|E|)\)-time algorithms under ETH.

\begin{corollary}
\label{cor:randomized-lower-bound}
Assuming the randomized Exponential Time Hypothesis, there is no bounded-error randomized algorithm that solves Global Label Min-Cut in time
\[
    (np)^{o(\log n)}\operatorname{poly}(|E|).
\]
The statement remains true even when the input graph is required to
be simple.
\end{corollary}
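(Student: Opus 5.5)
The corollary follows from the deterministic reduction behind \cref{thm:main}. We rerun that argument with the randomized Exponential Time Hypothesis in place of ETH. I would first isolate what the proof of \cref{thm:main} actually supplies. It is a deterministic polynomial-time reduction chain: 3-SAT on $N$ variables with $O(N)$ clauses (after sparsification), then \MCC{} with $k$ colour classes of size roughly $2^{O(N/k)}$, then a simple edge-labeled graph $G$ with $n$ vertices, $p=n^{O(1)}$ labels, and a threshold. The instance has two properties. It is a \Yes-instance of \GLMC{} if and only if the formula is satisfiable. And the parameters are chosen so that $\log n=\Theta(k)$ while $\log(np)=O(N/k+k)$; concretely, taking $k=\Theta(\sqrt N)$ and using the verifier of size $q^{O(1)}2^{O(k)}$ gives $(np)^{o(\log n)}=2^{o(N)}$.

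Given a hypothetical bounded-error randomized algorithm $\mathcal A$ for \GLMC{} running in time $(np)^{o(\log n)}\poly(|E|)$, I would compose it with this chain. A 3-SAT formula is first transformed by the sparsification lemma. Its randomized form is unnecessary, since the deterministic sparsification lemma produces $2^{\varepsilon N}$ sparse formulas. Alternatively, rETH is usually stated for 3-SAT directly, so no sparsification is needed beyond what the reduction already does deterministically. Each resulting formula is mapped deterministically to $G$, and $\mathcal A$ is run on $G$. The reduction time is polynomial in $|G|$, with $|E|\le n^2$ because $G$ is simple. So $\poly(|E|)=n^{O(1)}=2^{O(k)}=2^{o(N)}$, and the total time is $2^{o(N)}$.

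For error control, because $\mathcal A$ has bounded error, say at most $1/3$, repeating it $O(N)$ times on each instance and taking the majority vote drives the per-instance error below $2^{-\Omega(N)}$. A union bound over the sparsified branches, if any, keeps the total error below $1/3$. The branch count is at most $2^{\varepsilon N}$, which fits within $2^{o(N)}$ after the usual choice of $\varepsilon\to0$; the union bound then requires per-instance error at most $2^{-\varepsilon N}/3$, which the $O(N)$ repetitions supply. The result is a bounded-error randomized $2^{o(N)}$ algorithm for 3-SAT, contradicting rETH.

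The main point to check is that nothing in the proof of \cref{thm:main} uses determinism of the solver. This is the case because the reduction is deterministic and correctness is an if-and-only-if statement about instances. The only other thing to verify is the parameter accounting, namely that the exponent $o(\log n)\cdot\log(np)$ is $o(N)$ uniformly. This I would take verbatim from the theorem's proof. Simplicity transfers automatically, since the hard instances of \cref{thm:main} are already simple.
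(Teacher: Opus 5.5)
Your route is the paper's: its proof of the corollary notes that every step of the chain (sparse $3$-SAT $\to$ balanced \MCC{} $\to$ \DGLMC{} $\to$ \GLMC{} on multigraphs $\to$ \GLMC{} on simple graphs) is deterministic, composes the chain with the randomized solver, and contradicts Proposition~\ref{prop:sparse-sat-rETH}. The sparsification, the $O(N)$-fold majority amplification, and the union bound over at most $2^{\varepsilon N}$ branches that you carry out inline are exactly the content of that proposition's proof.

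One step of your bookkeeping fails as written: you cannot send $\varepsilon\to0$ to absorb the $2^{\varepsilon N}$ branches into $2^{o(N)}$. The sparsity constant $c=c(\varepsilon)$ fixes $C>3c+4$, hence $q=2^{\lceil CN/k\rceil}$ and $\ell$. The ratio $\log n\cdot\log(np)/N$ then grows with $C$, roughly like $C^2$. So the per-branch bound $(np)^{o(\log n)}\poly(|E|)=2^{o(N)}$ holds only for each fixed $\varepsilon$. What you actually obtain is a $2^{\varepsilon N+o(N)}$-time algorithm, not a single $2^{o(N)}$-time one.

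Finish as the paper does. Take $\gamma>0$ such that rETH excludes bounded-error $2^{\gamma N}$-time algorithms for $3$-SAT, and fix $\varepsilon<\gamma/2$ once and for all. Then $2^{\varepsilon N}\poly(N)+2^{\varepsilon N}\cdot O(N)\cdot 2^{o(N)}\le 2^{\gamma N}$ for large $N$, which is the contradiction.

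Two smaller points. First, sparsification is not optional, contrary to your ``if any'' and your alternative of skipping it. The bound $|X_i|\le 3cN/k+3$, and with it $\log q=\Theta(k)$, needs $M\le cN$. Second, the instances become simple only after Lemma~\ref{lem:multigraph-to-simple}, since the reduction itself outputs parallel edges. That step is deterministic and keeps $\log n^\sharp=\Theta(\log n)$ and $p^\sharp\le n^{O(1)}$, so your accounting survives once it is included in the chain.
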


Together with the known randomized
\(n^{O(\log\operatorname{OPT})}\)-time upper bound~\cite{Ghaffari2017}, Corollary~\ref{cor:randomized-lower-bound} shows that the worst-case exponent is tight up to constant factors under rETH.

Our main technical and structural contributions are as follows.

\begin{enumerate}
  \item We replace each label graph by its connected-component partition and interpret the union of several label graphs as the join of their partitions. This yields a unified description of connectivity-testing modules as recognizers in finite join-semilattices with top elements.

  \item For $k$ color classes with $q$ candidates in each class, we construct an additive-shift exact-one verifier on
  \[
    1+\ell2^k=O(q2^k+4^k)
  \]
  points, where $\ell$ is a prime larger than $\max\{q,2^k\}$. The union of any two distinct candidates from one class is connected, whereas every transversal containing one candidate from each class remains disconnected.

  \item We separate selection validity from clique consistency. A polynomial-size nonedge verifier needs to be correct only on valid transversals. A pointed partition disjunctive composition then implements the logical disjunction of the two rejection conditions while preserving finite joins.

  \item Combining these ingredients gives a deterministic reduction from balanced \MCC{} with
  \[
    p=kq,
    \qquad
    n=q^{O(1)}2^{O(k)}.
  \]
  A direct sparse $3$-SAT grouping with $k=\Theta(\sqrt N)$ and $\log q=\Theta(k)$ yields the lower bound in \cref{thm:main}. We also give an optimum-preserving polynomial transformation from edge-labeled multigraphs to edge-labeled simple graphs.
\end{enumerate}

\subsection{Organization}
The remainder of the paper is organized as follows. \Cref{sec:prelim} introduces partition lattices, connected-component representations of label graphs, \GLMC{} and its dual formulation, and the consequences of ETH and rETH used in the paper. \Cref{sec:modules} constructs the additive-shift exact-one verifier and the nonedge-conflict clique-consistency verifier. \Cref{sec:algebra} defines pointed partition lifts and the pointed partition disjunctive composition and proves the finite-join preservation and disjunctive properties. \Cref{sec:reduction} combines the two verification modules and gives the complete reduction from balanced \MCC{} to \DGLMC, together with its size analysis. \Cref{sec:sat} constructs a parameter-balanced multicolored-clique instance from sparse $3$-SAT, completes the parameter conversion for the conditional running-time lower bound, and transfers the lower bound from multigraphs to simple graphs. \Cref{sec:conclusion} concludes the paper.

\section{Preliminaries}\label{sec:prelim}

For the convenience of the reader, in this section, we give some preliminary results that will be needed later in the paper.
\subsection{Partition Lattices and Connected Components}
Throughout the paper, all logarithms are base two. The notation
\[
  (np)^{o(\log n)}\operatorname{poly}(|E|)
\]
is understood uniformly over $p$: it denotes a running time of the form
\[
  (np)^{h(n,p)}\operatorname{poly}(|E|),
\]
where, for every $\varepsilon>0$, there exists $n_\varepsilon$ such that
\[
  h(n,p)\le \varepsilon\log n
\]
for all $n\ge n_\varepsilon$ and and every integer $p\geq 1$.

For a finite set $W$, let $\Part(W)$ denote the set of all partitions of $W$. We use the refinement order:
\[
  \pi\preceq\sigma
\]
means that every block of $\pi$ is contained in a block of $\sigma$. The discrete partition is denoted by $\zero_W$, and the one-block partition by $\one_W=\{W\}$. The join $\pi\join\sigma$ is the finest partition coarser than both $\pi$ and $\sigma$. Equivalently, viewing each partition as an equivalence relation, the join is the transitive closure of the union of the two relations. The join of the empty family is defined to be $\zero_W$.

For a graph $H$ on the common vertex set $W$ (isolated vertices are allowed), let $\cc(H)\in\Part(W)$ denote its connected-component partition.

\begin{lemma} \label{lem:ccjoin}
If $H_1,\ldots,H_r$ are graphs on the same vertex set $W$, then
\[
  \cc\!\left(\bigcup_{j=1}^r H_j\right)
  =\bigjoin_{j=1}^r\cc(H_j).
\]
In particular, $\bigcup_jH_j$ is connected if and only if $\bigjoin_j\cc(H_j)=\one_W$.
\end{lemma}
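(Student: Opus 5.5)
We prove the identity by viewing each partition as an equivalence relation on $W$. Then it suffices to show that the equivalence relations $\sim_H$, with $u\sim_H v$ iff $u$ and $v$ lie in the same connected component of $H=\bigcup_{j=1}^r H_j$, and $\sim_J$, the relation of $J=\bigjoin_{j=1}^r\cc(H_j)$, coincide. Recall that $J$ is the transitive closure of the union of the relations $\sim_{H_j}$. We handle $r=0$ separately: $H$ is then the edgeless graph on $W$, so $\cc(H)=\zero_W$, which matches the convention that the empty join is $\zero_W$. From now on assume $r\ge 1$.

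\emph{Step 1: $\cc(H)\preceq J$.} Let $u\sim_H v$. Then there is a walk $u=w_0,w_1,\dots,w_m=v$ in $H$. Each edge $w_{i-1}w_i$ belongs to some $H_{j_i}$, so $w_{i-1}\sim_{H_{j_i}} w_i$. Hence each consecutive pair is related in the union of the relations $\sim_{H_j}$. Since $J$ is transitive and contains that union, we get $u\sim_J v$.

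\emph{Step 2: $J\preceq\cc(H)$.} First, $\cc(H_j)\preceq\cc(H)$ for every $j$. Indeed, $H_j$ is a subgraph of $H$ on the same vertex set, so any path in $H_j$ is also a path in $H$. Thus $\cc(H)$ is an upper bound of all the $\cc(H_j)$. By definition the join is the least such upper bound, so $J\preceq\cc(H)$.

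Antisymmetry of $\preceq$ now gives equality. For the final claim, $H$ is connected iff $\cc(H)=\one_W$, which by the identity holds iff $J=\one_W$. No step is a real obstacle. The only care needed is to handle isolated vertices (singletons are blocks on both sides) and the empty family correctly.
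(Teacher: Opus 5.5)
Your proof is correct and essentially the same as the paper's, which proves both inclusions directly by turning paths in $\bigcup_j H_j$ into chains of pairs related by some $\cc(H_j)$ and back again. The one small difference is your Step~2: there you use the least-upper-bound property of the join (each $\cc(H_j)\preceq\cc(H)$) rather than converting a chain back into a path, which is equally valid; your explicit treatment of the empty family is a small extra.
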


\begin{proof}
Two vertices belong to the same connected component on the left-hand side if and only if there is a path whose every edge belongs to some $H_j$. This is equivalent to connecting the two vertices by a finite chain of pairs, each pair lying in the same equivalence class of some $\cc(H_j)$, which is precisely membership in the transitive closure of the union of these equivalence relations.
\end{proof}

Conversely, every $\pi\in\Part(W)$ can be realized by a forest: choose a spanning tree within each block and add no edge between distinct blocks. Thus, $\pi$ can be realized by a graph $H$ with at most $|W|-1$ edges such that $\cc(H)=\pi$.

\subsection{Problem Definitions}

\begin{problem}[\GLMC]
The input consists of an undirected multigraph $G=(V,E)$, a labeling function $\chi:E\to[p]$, and a budget $b\in\{0,\ldots,p\}$. The task is to decide whether there exists a nontrivial set $\varnothing\neq S\subsetneq V$ such that
\[
  |\chi(\delta(S))|\le b.
\]
Define
\[
    \operatorname{OPT}(G,\chi)
    :=
    \min_{\varnothing\neq S\subsetneq V}
    |\chi(\delta(S))|.
\]
\end{problem}

For each label $\lambda$, view all edges carrying $\lambda$ as a graph on the common vertex set $V$:
\[
  H_\lambda=(V,E_\lambda),
  \qquad
  E_\lambda=\{e\in E:\chi(e)=\lambda\}.
\]

Following Jaffke et al.~\cite[Section~2]{Jaffke2026}, we use the
following dual formulation.

\begin{problem}[\DGLMC]
The input consists of a common vertex set $W$, graphs $H_1,\ldots,H_p$, and an integer $a\in\{0,\ldots,p\}$. The task is to decide whether there exists $I\subseteq[p]$ with $|I|=a$ such that
\[
  \bigcup_{\lambda\in I}H_\lambda
\]
is disconnected.
\end{problem}

\begin{lemma}[Dual Equivalence\cite{Jaffke2026}]\label{lem:dual}
Given an instance of Global Label Min-Cut with $p$ labels and budget $b$,
let $a=p-b$. The instance is a \Yes-instance if and only if there exists a set
$I\subseteq [p]$ with $|I|=a$ such that
\[
  \bigcup_{\lambda\in I} H_\lambda
\]
is disconnected.
\end{lemma}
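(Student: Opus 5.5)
The plan is to show that the two conditions are equivalent through complementation of label sets, namely $I=[p]\setminus\chi(\delta(S))$. The key observation is the following. Fix a set $I\subseteq[p]$ and let $H_I=\bigcup_{\lambda\in I}H_\lambda$, the graph on $V$ whose edge set consists of the edges labeled in $I$. Then, for a nontrivial set $S$, we have $\chi(\delta(S))\cap I=\varnothing$ if and only if no edge of $H_I$ crosses $S$. The graph $H_I$ is therefore disconnected exactly when some nontrivial $S$ has all its crossing edges labeled outside $I$.

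For the forward direction, suppose some nontrivial $S$ satisfies $|\chi(\delta(S))|\le b$. Then the complement $J=[p]\setminus\chi(\delta(S))$ has $|J|\ge p-b=a$. Choose any $I\subseteq J$ with $|I|=a$; this is possible because $a\in\{0,\dots,p\}$. No edge with a label in $I$ lies in $\delta(S)$, so $S$ is a union of connected components of $H_I$. Since $\varnothing\ne S\subsetneq V$, the graph $H_I$ is disconnected.

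For the reverse direction, suppose $|I|=a$ and $H_I$ is disconnected. Let $S$ be the vertex set of one connected component of $H_I$; it is nonempty and proper. Every edge in $\delta(S)$ must have a label outside $I$, since otherwise it would join $S$ to the rest of $V$ in $H_I$. Hence $\chi(\delta(S))\subseteq[p]\setminus I$, which gives $|\chi(\delta(S))|\le p-a=b$.

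There is no real obstacle. The only care points are the padding step in the forward direction, which needs $|J|\ge a$ together with $a\ge 0$, and the convention that $H_I$ for $I=\varnothing$ is edgeless. That graph is disconnected precisely when $|V|\ge2$, and this matches the existence of a nontrivial $S$. Both conventions are consistent with the definitions above. Multigraph edges and loops cause no trouble: a loop never lies in $\delta(S)$ and never affects connectivity.
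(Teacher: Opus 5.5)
Your proof is correct and complete. For the forward direction you take $I\subseteq[p]\setminus\chi(\delta(S))$ with $|I|=a$, which is possible since $0\le a\le p-|\chi(\delta(S))|$; for the reverse direction you use a connected component of a disconnected $H_I$ as $S$. The paper does not prove this lemma itself but cites Jaffke et al., and your complementation argument is the standard argument for it.
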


\begin{problem}[\MCC]
The input is a $k$-partite graph
\[
  G,\qquad
  V(G)=V_1\uplus\cdots\uplus V_k,
  \qquad
  |V_i|=q.
\]
The task is to decide whether there exist $x_i\in V_i$ such that $\{x_1,\ldots,x_k\}$ is a clique.
\end{problem}

We regard every candidate vertex $u\in V_i$ as a label $(i,u)$. Hence, the total number of labels is
\[
  p=kq.
\]

The Exponential Time Hypothesis (ETH) asserts that $3$-SAT admits no deterministic algorithm with running time subexponential in the number of variables~\cite{Impagliazzo2001}. Combining ETH with the Sparsification Lemma~\cite{ImpagliazzoPaturi2001} gives the following consequence, which we use below.

\begin{proposition}[Sparse consequence of ETH]\label{prop:sparse-sat}
Assuming ETH, there exists a constant $c>0$ such that no deterministic $2^{o(N)}$-time algorithm decides satisfiability of a $3$-CNF formula with $N$ variables and at most $cN$ clauses.
\end{proposition}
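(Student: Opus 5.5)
The plan is to combine ETH with the Sparsification Lemma in the standard way. Take any 3-CNF $\varphi$ with $N$ variables. By the Sparsification Lemma of Impagliazzo, Paturi and Zane, for every $\varepsilon>0$ there is a deterministic algorithm that produces $t\le 2^{\varepsilon N}$ formulas $\psi_1,\ldots,\psi_t$ in time $2^{\varepsilon N}\poly(N)$. The formulas have the following properties:
\begin{itemize}
  \item each $\psi_i$ is a 3-CNF over the same $N$ variables;
  \item each $\psi_i$ has at most $C_\varepsilon N$ clauses, where $C_\varepsilon$ depends only on $\varepsilon$;
  \item $\varphi$ is satisfiable if and only if some $\psi_i$ is satisfiable.
\end{itemize}

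The argument is by contradiction. Suppose that for every constant $c>0$ there is a deterministic $2^{o(N)}$-time algorithm $A_c$ for 3-CNFs with $N$ variables and at most $cN$ clauses. To decide $\varphi$ in time $2^{\delta N}$ for an arbitrary fixed $\delta>0$, proceed as follows.
\begin{enumerate}
  \item Set $\varepsilon=\delta/3$ and sparsify $\varphi$ into $\psi_1,\ldots,\psi_t$.
  \item Run $A_{C_\varepsilon}$ on each $\psi_i$.
  \item Accept if and only if some run accepts.
\end{enumerate}

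Correctness follows from the equivalence guaranteed by sparsification. For the running time:
\begin{itemize}
  \item the sparsification step costs $2^{\varepsilon N}\poly(N)$;
  \item each call to $A_{C_\varepsilon}$ costs $2^{o(N)}\le 2^{\varepsilon N}$ for large $N$;
  \item there are $t\le 2^{\varepsilon N}$ calls.
\end{itemize}
The total is therefore at most $2^{2\varepsilon N}\poly(N)\le 2^{\delta N}$ for all sufficiently large $N$.

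One subtlety needs care. A single algorithm must be obtained whose running time is $2^{o(N)}$, whereas the construction above yields, for each $\delta$, an algorithm with running time $2^{\delta N}$. ETH is usually stated as "there is $s>0$ such that 3-SAT has no $2^{sN}$ algorithm". Under that form, it suffices to take $\delta=s$, fix $c:=C_{s/3}$, and apply the assumption only for this single constant. This gives the proposition with that $c$, and no uniformity issue arises.

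If instead ETH is read literally as "no $2^{o(N)}$ algorithm", the two forms are equivalent by the standard diagonalization across $\delta$: run the $2^{\delta N}$ algorithms with $\delta\to 0$ slowly, switching at thresholds $N_\delta$. I will invoke this equivalence rather than reprove it.

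The only real obstacle is bookkeeping about the order of quantifiers. The constant $c$ must be chosen before the hypothetical fast algorithm, so the contradiction has to run through the "$2^{sN}$" form of ETH.
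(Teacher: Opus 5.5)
Your argument is correct and follows the same route as the paper, which writes it out only for the randomized analogue, Proposition~\ref{prop:sparse-sat-rETH}; there the only extra ingredient is error amplification. In both, you take the constant $s$ from ETH in its ``no $2^{sN}$-time algorithm'' form, fix $\varepsilon$ below $s/2$, set $c=C_\varepsilon$, and contradict ETH by running the hypothetical sparse algorithm on the at most $2^{\varepsilon N}$ sparsified subformulas. Drop, rather than invoke, your aside that the literal ``no $2^{o(N)}$'' reading is equivalent by a standard diagonalization: the algorithms $A_c$ are not given uniformly in $c$, and the thresholds $N_\delta$ are not computable, so that diagonalization is not available in general; your main argument, like the paper's, never needs it.
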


The randomized Exponential Time Hypothesis (rETH) asserts that $3$-SAT admits no randomized algorithm with running time subexponential in the number of variables and success probability at least $2/3$ on every input~\cite{Dell2014}. Combining rETH with the Sparsification Lemma~\cite{ImpagliazzoPaturi2001} and error amplification gives the following randomized consequence, which we use below.

\begin{proposition}[Sparse consequence of rETH]\label{prop:sparse-sat-rETH}
Assuming rETH, there exists a constant $c_{\mathrm r}>0$ such that no bounded-error randomized $2^{o(N)}$-time algorithm decides satisfiability of 3-CNF formulas with $N$ variables and at most $c_{\mathrm r}N$ clauses.
\end{proposition}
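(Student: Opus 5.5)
The plan is the standard three-step route: sparsification, then error amplification, then a Turing-reduction argument. Start from rETH, i.e.\ from a fixed bounded-error randomized algorithm that solves $3$-SAT with success probability at least $2/3$ on every input. Assume, for contradiction, that there is a bounded-error randomized algorithm $\mathcal A$ for sparse $3$-CNF formulas (those with $N$ variables and at most $c_{\mathrm r}N$ clauses, for a constant $c_{\mathrm r}$ to be fixed) running in time $2^{o(N)}$ for every such constant. We will show this yields a $2^{o(N)}$ randomized algorithm for general $3$-SAT, contradicting rETH.

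The key tool is the Sparsification Lemma~\cite{ImpagliazzoPaturi2001}. It is deterministic: for every $\varepsilon>0$ there is $C_\varepsilon$ such that any $3$-CNF $\varphi$ on $N$ variables can be written, in time $2^{\varepsilon N}\poly(N)$, as a disjunction $\varphi\equiv\bigvee_{t=1}^{s}\psi_t$. Here $s\le 2^{\varepsilon N}$, each $\psi_t$ is a $3$-CNF over the same $N$ variables, and each has at most $C_\varepsilon N$ clauses.

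The main obstacle is the order of quantifiers. The sparsity constant depends on $\varepsilon$, whereas the proposition asks for a single constant $c_{\mathrm r}$. I would handle this as in the deterministic Proposition~\ref{prop:sparse-sat}: fix one $\varepsilon_0$ and set $c_{\mathrm r}=C_{\varepsilon_0}$. Stated contrapositively, suppose that for every $c$ sparse-$3$-SAT with density $c$ is in randomized time $2^{o(N)}$. Then for each $\varepsilon$ we run the sparsifier with $C_\varepsilon$ and call the density-$C_\varepsilon$ algorithm. This gives, for every $\varepsilon$, a $2^{2\varepsilon N}$ algorithm for $3$-SAT. That already violates rETH, read in its standard form "there is $\delta>0$ with no $2^{\delta N}$ bounded-error algorithm". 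This reading is the one implied by "no subexponential algorithm", via the usual Impagliazzo--Paturi argument; taking $\varepsilon<\delta/2$ gives the contradiction. Dually, if rETH supplies such a $\delta$, then fixing $\varepsilon_0=\delta/4$ gives a single constant $c_{\mathrm r}=C_{\varepsilon_0}$ for which sparse instances need time $2^{\delta N/4}$. In particular they need more than $2^{o(N)}$.

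For the reduction itself, we decide $\varphi$ by querying all $s\le 2^{\varepsilon N}$ sparse instances $\psi_t$ and answering \Yes{} iff some query returns \Yes. Each query is a bounded-error call, so the errors must be controlled across exponentially many calls. This is where error amplification enters. Repeat $\mathcal A$ on each $\psi_t$ $O(\varepsilon N)$ times and take the majority vote. By a Chernoff bound the per-instance error drops to at most $2^{-\varepsilon N}/3$, and a union bound over the $s$ instances keeps the total error at most $1/3$.

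The total running time is $2^{\varepsilon N}\cdot O(N)\cdot 2^{o(N)}+2^{\varepsilon N}\poly(N)\le 2^{2\varepsilon N}$ for large $N$. This completes the contradiction as described above and establishes Proposition~\ref{prop:sparse-sat-rETH}.
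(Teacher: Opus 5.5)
Your proposal is correct and follows essentially the same route as the paper. Both take the constant $\gamma>0$ (your $\delta$) from rETH, fix $\varepsilon<\gamma/2$, set $c_{\mathrm r}=c_{\mathrm r}(\varepsilon)$ via the Sparsification Lemma, amplify each sparse call with $O(N)$ majority-vote repetitions, and union-bound over the at most $2^{\varepsilon N}$ subformulas, giving a $2^{\varepsilon N+o(N)}<2^{\gamma N}$ bounded-error algorithm for $3$-SAT.

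The paper simply takes the ``no $2^{\gamma N}$ algorithm'' form as what rETH says, so your side discussion of quantifier order is not needed. Your opening sentence should be rephrased: it describes rETH as supplying an algorithm, rather than asserting that no fast one exists.
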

\begin{proof}
By rETH, there exists a constant $\gamma>0$ such that no
bounded-error randomized algorithm decides $3$-SAT in time
$2^{\gamma N}$. Fix a constant
\[
    0<\varepsilon<\frac{\gamma}{2}.
\]
By the Sparsification Lemma, every $3$-CNF formula $F$ with $N$
variables can be expressed, in time $2^{\varepsilon N}\poly(N)$, as
\[
    F \equiv \bigvee_{i=1}^{t} F_i,
    \qquad
    t\leq 2^{\varepsilon N},
\]
where every $F_i$ has at most $c_rN$ clauses for some constant
$c_r=c_r(\varepsilon)>0$.

Suppose that these sparse formulas can be decided in time $2^{o(N)}$
by a randomized algorithm whose error probability is at most $1/3$
on every input. For each $F_i$, repeat the algorithm independently
$O(\log t)=O(N)$ times and take the majority answer, reducing its
error probability to at most $1/(6t)$. Return \textsc{Yes} if and
only if at least one amplified computation returns \textsc{Yes}.
By the union bound, the resulting algorithm has error probability
at most $1/6$.

The total running time is
\[
    2^{\varepsilon N}\poly(N)
    +t\cdot O(N)\cdot 2^{o(N)}
    =2^{\varepsilon N+o(N)}
    \leq 2^{\gamma N}
\]
for all sufficiently large $N$, contradicting rETH. Therefore, no
bounded-error randomized $2^{o(N)}$-time algorithm decides $3$-CNF
satisfiability with at most $c_rN$ clauses.
\end{proof}

\section{Selection Validity and Clique-Consistency Verification Modules}\label{sec:modules}
Under the constraint that exactly $k$ candidate labels are selected, an invalid selection for \MCC{} has one of the following two forms:
\begin{enumerate}
  \item some color class is selected more than once, and consequently at least one color class is not selected;
  \item exactly one candidate is selected from every color class, but the selected candidates contain a nonedge between two color classes.
\end{enumerate}
Either violation makes the final solution infeasible. Hence, a complete reduction can be obtained by detecting these two types of invalidity separately and taking the logical disjunction of the two rejection conditions. This section introduces a compact classwise exact-one verifier for detecting multiple selections from the same color class and a nonedge-conflict clique-consistency verifier for detecting that the selected vertices do not form a clique.

The set $\Part(X)$ is a finite join-semilattice with a top element under the join operation. Let $\Lambda$ be the label set. We refer to the following map from labels to partitions as a partition-semilattice recognizer.

\begin{definition}[Partition-Semilattice Recognizer]
A partition-semilattice recognizer with ground set $X$ is a map
\[
  \Phi:\Lambda\longrightarrow\Part(X).
\]
For $S\subseteq\Lambda$, define
\[
  \widehat\Phi(S):=\bigjoin_{\lambda\in S}\Phi(\lambda).
\]
We say that $\Phi$ accepts $S$ if $\widehat\Phi(S)\neq\one_X$, and rejects $S$ if $\widehat\Phi(S)=\one_X$.
\end{definition}

The map $\widehat\Phi:(2^\Lambda,\cup)\to(\Part(X),\join)$ is a join-semilattice homomorphism:
\[
  \widehat\Phi(S\cup T)
  =\widehat\Phi(S)\join\widehat\Phi(T).
\]
By Lemma~\ref{lem:ccjoin}, if $\Phi(\lambda)=\cc(H_\lambda)$, then acceptance means exactly that the union of the selected label graphs is disconnected.

This acceptance rule has the following monotonicity property: if $S\subseteq T$ and $\widehat\Phi(S)=\one_X$, then $\widehat\Phi(T)=\one_X$. Thus, the family of rejected sets is upward closed, whereas the family of accepted sets is downward closed. Our recognition of the property ``exactly one candidate is selected from every color class'' relies on the additional fixed-cardinality constraint $|S|=k$.

\subsection{Additive-Shift Exact-One Verifier}\label{sec:selector}
Suppose that \MCC{} has $k$ color classes, each containing $q$ candidates. Candidate labels are denoted by $(i,a)$, where $i\in[k]$ and $a\in[q]$. We construct a label system that, among selections of exactly $k$ labels, is disconnected precisely when exactly one candidate is chosen from each color class.

\subsubsection{The Pointed State Space}
Let
\[
  M_0=\max\{q,2^k\}.
\]
By the Bertrand--Chebyshev theorem, there is a prime $\ell$ satisfying
\[
  M_0<\ell<2M_0.
\]
Choose pairwise distinct residues
\[
  \alpha_1,\ldots,\alpha_q\in\mathbb Z_\ell.
\]
Define the pointed ground set
\[
  X=\{\bot\}\cup\bigl(2^{[k]}\times\mathbb Z_\ell\bigr),
\]
with basepoint $\bot$. Hence,
\[
  |X|=1+\ell2^k.
\]
The first coordinate records a subset of color classes, while the second coordinate records an additive state modulo $\ell$.
\begin{remark}
Such a prime $\ell$ can be found deterministically by testing the integers
\[
M_0+1,M_0+2,\ldots,2M_0-1
\]
using a deterministic polynomial-time primality test~\cite{Agrawal2004}. This requires at most $M_0$ primality tests and hence takes
\[
M_0\operatorname{poly}(\log M_0)
\]
time. Since the constructed ground set satisfies $|X|=1+\ell 2^k>M_0$, this running time is polynomial in the
explicit output size.
\end{remark}

\subsubsection{Candidate Graphs and Partitions}
For $i\in[k]$ and $a\in[q]$, define a graph $J_{i,a}$ on $X$ as follows.
\begin{enumerate}
  \item For every $S\subseteq[k]$, add the common basepoint edge
  \[
    \bot(S,0).
  \]
  \item For every $S\subseteq[k]\setminus\{i\}$ and every $t\in\mathbb Z_\ell$, add the shift edge
  \[
    (S,t)\left(S\cup\{i\},t+\alpha_a\right),
  \]
  where addition is modulo $\ell$.
\end{enumerate}
Let
\[
  E_{i,a}=\cc(J_{i,a})\in\Part(X).
\]

\begin{lemma}[Two Distinct Candidates from One Class Reach the Top]\label{lem:same-group}
Fix $i\in[k]$. For any $a\neq b$,
\[
  E_{i,a}\join E_{i,b}=\one_X.
\]
Equivalently, $J_{i,a}\cup J_{i,b}$ is connected.
\end{lemma}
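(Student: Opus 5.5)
By \cref{lem:ccjoin}, it suffices to show that the graph $J_{i,a}\cup J_{i,b}$ is connected. The plan is to show that every vertex of $X$ lies in the component of $\bot$.

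\emph{Step 1: slices not containing $i$.} Fix $S\subseteq[k]\setminus\{i\}$. The shift edges of $J_{i,a}$ join $(S,t)$ to $(S\cup\{i\},t+\alpha_a)$, and those of $J_{i,b}$ join $(S,t)$ to $(S\cup\{i\},t+\alpha_b)$. Following an $a$-edge forward and then a $b$-edge backward gives a path from $(S,t)$ to $(S,t+\alpha_a-\alpha_b)$. Since $a\neq b$, the residues $\alpha_a,\alpha_b$ are distinct, so $\delta:=\alpha_a-\alpha_b$ is a nonzero element of $\mathbb Z_\ell$. Because $\ell$ is prime, $\delta$ generates $\mathbb Z_\ell$ additively. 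Iterating this step therefore connects all points $(S,t)$, $t\in\mathbb Z_\ell$, and the whole slice $\{S\}\times\mathbb Z_\ell$ lies in one component.

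\emph{Step 2: slices containing $i$.} For $S\ni i$, every point $(S,u)$ has the form $(S\setminus\{i\},u-\alpha_a)$ shifted by $\alpha_a$. It is therefore joined by an $a$-edge to the slice of $S\setminus\{i\}$, which is connected by Step 1. Hence the slices of $S\setminus\{i\}$ and $S$ together lie in a single component.

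\emph{Step 3: attaching the basepoint.} For every $S$, the basepoint edge $\bot(S,0)$ joins $\bot$ to the slice of $S$. All slices therefore sit in the component of $\bot$, and $J_{i,a}\cup J_{i,b}$ is connected. Applying \cref{lem:ccjoin} gives $E_{i,a}\join E_{i,b}=\one_X$.

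The only step that uses the specific choices in the construction is Step 1. The distinctness of the $\alpha$'s and the primality of $\ell$ are exactly what make the difference $\delta$ a generator of $\mathbb Z_\ell$. The remaining steps are routine bookkeeping.
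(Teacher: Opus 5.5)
Your proof is correct and follows essentially the same route as the paper. Both arguments use the two-edge walk that translates by $\alpha_a-\alpha_b\neq 0$ within each layer $S\not\ni i$, use the primality of $\ell$ to make this difference a generator of $\mathbb Z_\ell$, attach each layer $S\ni i$ through the shift edges of $J_{i,a}$, and use the basepoint edges to connect everything to $\bot$.
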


\begin{proof}
Fix $R\subseteq[k]\setminus\{i\}$ and write
\[
  \delta=\alpha_a-\alpha_b\neq0
  \quad\text{in }\mathbb Z_\ell.
\]
For every $t\in\mathbb Z_\ell$, the union contains the two-edge walk
\[
  (R,t)
  \;--\;
  (R\cup\{i\},t+\alpha_a)
  \;--\;
  (R,t+\alpha_a-\alpha_b).
\]
Thus, within the $R$-layer, we can repeatedly apply the translation
\[
  t\longmapsto t+\delta.
\]
Because $\ell$ is prime, the additive group $\mathbb Z_\ell$ is cyclic of prime order, and every nonzero element generates the whole group. Hence every $(R,t)$ is connected to $(R,0)$, which is adjacent to $\bot$.

Every state $(R\cup\{i\},t)$ is adjacent in $J_{i,a}$ to $(R,t-\alpha_a)$, and therefore it is also connected to $\bot$. Since every subset of $[k]$ is either $R$ or $R\cup\{i\}$ for a unique $R\subseteq[k]\setminus\{i\}$, the whole graph is connected.
\end{proof}

\begin{lemma}[Every Transversal Remains Non-Top]\label{lem:transversal}
For every $(a_1,\ldots,a_k)\in[q]^k$,
\[
  \bigjoin_{i=1}^k E_{i,a_i}\neq\one_X.
\]
Equivalently, $\bigcup_{i=1}^kJ_{i,a_i}$ is disconnected.
\end{lemma}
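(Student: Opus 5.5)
Fix a transversal $(a_1,\ldots,a_k)$ and put $H=\bigcup_{i=1}^k J_{i,a_i}$. By \cref{lem:ccjoin} it suffices to show that $H$ is disconnected. The plan is to find an invariant (a potential function) on $X$ that is preserved along every edge of $H$, and then exhibit two vertices with different invariant values.

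Define $\phi:X\to\mathbb Z_\ell$ by $\phi(\bot)=0$ and
\[
  \phi(S,t)=t-\sum_{i\in S}\alpha_{a_i}\pmod{\ell}.
\]
We check that $\phi$ is constant on every edge of $H$.
\begin{itemize}
  \item A basepoint edge $\bot(S,0)$ gives $\phi(S,0)=-\sum_{i\in S}\alpha_{a_i}$. This is not $0$ in general, so we modify the potential to treat $\bot$ uniformly: set $\psi(\bot)=\ast$ and declare the \emph{good} vertices to be those $(S,t)$ with $\phi(S,t)=\phi(S,0)$, i.e.\ $t=0$. That choice fails, so we take a better route instead.
  \item The right invariant should record $t$ relative to the transversal's own shifts along $S$. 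Let $C$ be the set of vertices $(S,t)$ with $t=\sum_{i\in S}\alpha_{a_i}$, together with $\bot$. We claim that $C$ is a union of components of $H$.
\end{itemize}

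For a shift edge $(S,t)(S\cup\{i\},t+\alpha_{a_i})$ of $J_{i,a_i}$ (with $i\notin S$), both endpoints satisfy $t-\sum_{j\in S}\alpha_{a_j}=t'-\sum_{j\in S'}\alpha_{a_j}$. Hence the quantity $\phi$ is preserved by every shift edge, and shift edges never leave $C$ and never enter it from outside. A basepoint edge joins $\bot$ to $(S,0)$, and $(S,0)\in C$ exactly when $\sum_{i\in S}\alpha_{a_i}=0$. This is where the argument needs care: basepoint edges attach $\bot$ to layers with $\phi=-\sum_{i\in S}\alpha_{a_i}$, which can take many values. So $\phi$ alone does not separate components, and the true component of $\bot$ is the union of the $\phi$-classes $\{\phi=-\sum_{i\in S}\alpha_{a_i}\}$ over all $S\subseteq[k]$.

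The correct argument is therefore a counting one. The component of $\bot$ is contained in
\[
  \{\bot\}\cup\Bigl\{(T,t): \phi(T,t)\in A\Bigr\},\qquad A=\Bigl\{-\textstyle\sum_{i\in S}\alpha_{a_i}:S\subseteq[k]\Bigr\}.
\]
The key step, and the main obstacle, is to show $A\neq\mathbb Z_\ell$. This follows from $|A|\le 2^k<\ell$, which is exactly why $\ell>2^k$ was chosen. Pick $c\in\mathbb Z_\ell\setminus A$. Then the vertex $(\varnothing,c)$ has $\phi=c\notin A$. Since every edge not incident to $\bot$ preserves $\phi$, and the edges incident to $\bot$ only enter $\phi$-classes lying in $A$, the vertex $(\varnothing,c)$ is not connected to $\bot$. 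Therefore $H$ is disconnected, and $\bigjoin_i E_{i,a_i}\neq\one_X$.
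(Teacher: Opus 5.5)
Your final argument is correct and is essentially the paper's proof. Your potential $\phi(S,t)=t-\sum_{i\in S}\alpha_{a_i}$ is exactly the index $c$ of the paper's layer $L_c$. The transversal's shift edges preserve it, and basepoint edges only reach the at most $2^k<\ell$ values $-\beta_S$. Hence the class $\phi^{-1}(c)$ with $c\notin A$ is cut off from $\bot$.

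In a write-up, delete the false starts: the $\psi$ attempt, and the claim that $C$ is a union of components. That claim is false, since $\bot$ is joined to $(S,0)\notin C$ whenever $\beta_S\neq 0$. Keep only the counting argument.
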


\begin{proof}
For every $S\subseteq[k]$, write
\[
  \beta_S=\sum_{i\in S}\alpha_{a_i}\pmod\ell.
\]
For each $c\in\mathbb Z_\ell$, define the layer
\[
  L_c
  =
  \left\{
    (S,c+\beta_S):S\subseteq[k]
  \right\}.
\]
The sets $L_c$, $c\in\mathbb Z_\ell$, partition $X\setminus\{\bot\}$, since every vertex $(S,t)$ belongs to the unique layer $L_{t-\beta_S}$.

First, we consider the selected shift edges. For every $i\notin S$, the shift edge contributed by $J_{i,a_i}$ joins
\[
  (S,c+\beta_S)
  \quad\text{and}\quad
  (S\cup\{i\},c+\beta_S+\alpha_{a_i}).
\]
Since
\[
  \beta_{S\cup\{i\}}=\beta_S+\alpha_{a_i},
\]
both endpoints belong to the same layer $L_c$.  Hence every selected shift edge is contained in a single layer, and no selected shift edge joins two distinct layers.

It remains to determine which layers are connected to $\bot$. Since
\[
  (S,0)=(S,-\beta_S+\beta_S),
\]
the common basepoint edge incident with $(S,0)$ connects $\bot$ to the layer
\[
  L_{-\beta_S}.
\]
Therefore, $\bot$ is adjacent only to layers whose indices belong to
\[
  \{-\beta_S:S\subseteq[k]\}.
\]
Since
\[
  |\{-\beta_S:S\subseteq[k]\}|\leq 2^k<\ell,
\]
there exists $c\in\mathbb Z_\ell\setminus\{-\beta_S:S\subseteq[k]\}$. The layer $L_c$ is nonempty, no shift edge leaves it, and no basepoint edge connects it to $\bot$. Since $\bot\notin L_c$, both $L_c$ and $X\setminus L_c$ are nonempty, and there is no edge between them. Therefore,
\[
  \bigcup_{i=1}^k J_{i,a_i}
\]
is disconnected.
\end{proof}

\begin{corollary}[Classwise Exact-One Property]\label{cor:exact-one}
Let $S\subseteq[k]\times[q]$ and $|S|=k$. Then
\[
  \bigjoin_{(i,a)\in S}E_{i,a}\neq\one_X
\]
if and only if $S$ selects exactly one candidate from each color class.
\end{corollary}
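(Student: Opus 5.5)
The corollary follows by combining \cref{lem:same-group} and \cref{lem:transversal} with a pigeonhole count; the only tool needed is the monotonicity of joins noted after the definition of a partition-semilattice recognizer. Two facts about joins are used. First, if $T\subseteq S$ then $\bigjoin_{T}E_{i,a}\preceq\bigjoin_{S}E_{i,a}$. Second, $\one_X$ is the top element of $\Part(X)$, so once a sub-join equals $\one_X$, the whole join equals $\one_X$.

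For the direction ``$\Leftarrow$'', suppose $S$ contains exactly one candidate $(i,a_i)$ from each color class $i\in[k]$. Then $S=\{(1,a_1),\ldots,(k,a_k)\}$ is a transversal. By \cref{lem:transversal}, $\bigjoin_{i=1}^kE_{i,a_i}\neq\one_X$, which is exactly the claim.

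For the direction ``$\Rightarrow$'', I argue by contraposition. Let $|S|=k$, and suppose $S$ does not select exactly one candidate from every class. The $k$ elements of $S$ are distributed among $k$ classes. If every class received at most one element, then, since there are $k$ elements and $k$ classes, every class would receive exactly one. So some class $i$ contains two elements $(i,a),(i,b)\in S$. These are distinct labels of $S$, so $a\neq b$.

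\cref{lem:same-group} then gives $E_{i,a}\join E_{i,b}=\one_X$. Since $\{(i,a),(i,b)\}\subseteq S$, monotonicity gives
\[
  \one_X=E_{i,a}\join E_{i,b}\preceq\bigjoin_{(j,c)\in S}E_{j,c}.
\]
Because $\one_X$ is the top element, the join over $S$ equals $\one_X$.

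There is no real obstacle here. The only points needing care are that the pigeonhole step uses $|S|=k$ in an essential way, and that $a\neq b$ holds because $S$ is a set of labels, so the two same-class elements are genuinely different candidates.
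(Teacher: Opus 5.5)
Your proof is correct and matches the paper's argument. When $|S|=k$ and $S$ is not exact-one, pigeonhole gives two distinct same-class candidates, and \cref{lem:same-group} together with monotonicity of the join forces $\one_X$; the exact-one case is a transversal, handled by \cref{lem:transversal}. The only difference is that you spell out the monotonicity/top-element step, which the paper leaves implicit.
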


\begin{proof}
If two distinct candidates are selected from the same color class, then Lemma~\ref{lem:same-group} implies that the join of those two partitions alone is already the top element. Conversely, if no class is selected more than once, then choosing exactly $k$ labels from $k$ classes forces exactly one selection from every class, and Lemma~\ref{lem:transversal} applies.
\end{proof}

\begin{remark}[Size of the selector]
Since $\ell<2\max\{q,2^k\}$,
\[
  |X|=1+\ell2^k
  =O(q2^k+4^k)
  =q^{O(1)}2^{O(k)}.
\]
\end{remark}

\subsection{Nonedge-Conflict Clique-Consistency Verifier}\label{sec:verifier}

\subsubsection{Construction of the Clique-Consistency Verifier}
For every nonedge between distinct color classes in the \MCC{} instance,
\[
  uv\notin E(G),
  \qquad
  u\in V_i,\ v\in V_j,\ i<j,
\]
create a verification vertex $z_{uv}$. Define
\[
  Y=\{s,t\}\cup
  \{z_{uv}:u\in V_i,\ v\in V_j,\ i<j,\ uv\notin E(G)\},
\]
and choose $s$ as the basepoint. Let $r$ be the number of nonedges between distinct color classes. Then
\[
  |Y|=2+r
  \le 2+\binom{k}{2}q^2.
\]

For every label $(i,u)$, construct a graph $Q_{i,u}$ on $Y$. For a nonedge $uv$ with $u\in V_i$, $v\in V_j$, and $i<j$, add edges according to the following rules:
\begin{enumerate}
  \item add $sz_{uv}$ to $Q_{i,u}$;
  \item add $z_{uv}t$ to $Q_{j,v}$;
  \item for every $w\in V_j\setminus\{v\}$, add $sz_{uv}$ to $Q_{j,w}$.
\end{enumerate}
No other edge incident with $z_{uv}$ is added. Define
\[
  R_{i,u}=\cc(Q_{i,u})\in\Part(Y).
\]

\begin{lemma}[Correctness of the Clique-Consistency Verifier]\label{lem:verifier}
Suppose that exactly one candidate $x_i\in V_i$ is selected from every color class. Then
\[
  \bigjoin_{i=1}^kR_{i,x_i}=\one_Y
\]
if and only if $\{x_1,\ldots,x_k\}$ is not a clique.
\end{lemma}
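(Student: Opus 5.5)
By Lemma~\ref{lem:ccjoin}, it suffices to prove that the union graph $Q:=\bigcup_{i=1}^k Q_{i,x_i}$ is connected if and only if some pair $x_i,x_j$ with $i<j$ is a nonedge. The proof rests on one local computation: for each verification vertex $z_{uv}$ (with $u\in V_i$, $v\in V_j$, $i<j$), we list exactly which edges of $Q$ are incident with it. By construction, every edge at $z_{uv}$ is either $sz_{uv}$ or $z_{uv}t$. Among the selected graphs, only $Q_{i,x_i}$ and $Q_{j,x_j}$ can contribute such edges.

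\textbf{The incidence table.} I will check that in $Q$:
\begin{itemize}
\item $z_{uv}$ is adjacent to $s$ if and only if $x_i=u$ (rule~1) or $x_j\neq v$ (rule~3);
\item $z_{uv}$ is adjacent to $t$ if and only if $x_j=v$ (rule~2).
\end{itemize}
Consequently, $z_{uv}$ is joined to \emph{both} $s$ and $t$ exactly when $x_i=u$ and $x_j=v$, that is, when $uv$ is the nonedge formed by the selected pair $x_i,x_j$. In every other case $z_{uv}$ is adjacent to exactly one of $s$ and $t$:
\begin{itemize}
\item if $x_j\neq v$, it is adjacent to $s$ only;
\item if $x_j=v$ but $x_i\neq u$, it is adjacent to $t$ only.
\end{itemize}
In particular, no vertex $z_{uv}$ is ever isolated.

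\textbf{Only if direction.} Suppose $\{x_1,\ldots,x_k\}$ is a clique. Then no $z_{uv}$ has $(u,v)=(x_i,x_j)$, because $z_{uv}$ exists only for nonedges. By the table, every $z_{uv}$ is a leaf attached to $s$ or to $t$. Since the graphs contain no $s$--$t$ edge and no edge between two $z$-vertices, every connected component of $Q$ contains at most one of $s$ and $t$. Hence $s$ and $t$ lie in different components, and the join is not $\one_Y$.

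\textbf{If direction.} Suppose some pair $x_ix_j$ with $i<j$ is a nonedge. Then $z_{x_ix_j}$ exists and, by the table, is adjacent to both $s$ and $t$, so $s$ and $t$ are connected in $Q$. Every other $z_{uv}$ is adjacent to $s$ or to $t$, so it lies in the same component. Therefore $Q$ is connected, and the join equals $\one_Y$.

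\textbf{Main obstacle.} The only delicate step is the incidence table. One must confirm that labels from classes other than $i$ and $j$ contribute no edges at $z_{uv}$. One must also handle rule~3 correctly: it fires for the single selected $x_j$ precisely when $x_j\neq v$. Throughout, the argument uses the hypothesis that exactly one candidate is selected per class.
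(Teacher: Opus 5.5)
Your proposal is correct and follows essentially the same route as the paper. You use Lemma~\ref{lem:ccjoin} to reduce to connectivity of the union graph $Q$, tabulate the neighborhood of each $z_{uv}$ inside $\{s,t\}$ (your table matches the paper's three cases), and conclude that $s$ and $t$ are separated exactly when no selected pair $x_i,x_j$ is a nonedge.
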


\begin{proof}
By Lemma~\ref{lem:ccjoin}, it suffices to analyze the union graph
\[
  Q=\bigcup_{i=1}^kQ_{i,x_i}.
\]
Fix a nonedge $uv$, where $u\in V_i$, $v\in V_j$, and $i<j$. By construction, $z_{uv}$ can be adjacent only to $s$ and $t$, and there are no edges between distinct conflict-verification vertices. Its neighborhood is
\[
  N_Q(z_{uv})\cap\{s,t\}
  =
  \begin{cases}
    \{s,t\}, & x_i=u,\ x_j=v,\\
    \{t\},   & x_i\neq u,\ x_j=v,\\
    \{s\},   & x_j\neq v.
  \end{cases}
\]
Thus, every conflict-verification vertex is adjacent to at least one of $s$ and $t$, and it is adjacent to both exactly when both endpoints of the corresponding nonedge are selected.

If the selected vertices form a clique, then no conflict-verification vertex is adjacent to both $s$ and $t$. Since there are no edges between conflict-verification vertices, $s$ and $t$ cannot be connected indirectly through several such vertices. Every conflict-verification vertex is attached to exactly one of them, and therefore $Q$ has exactly two nonempty connected components.

If the selected vertices do not form a clique, then some nonedge $uv$ has both endpoints selected, yielding the path
\[
  s-z_{uv}-t.
\]
Thus, $s$ and $t$ lie in the same connected component. Every other conflict-verification vertex is adjacent to at least one of $s$ and $t$, so the entire graph $Q$ is connected.
\end{proof}

The clique-consistency verifier is required to be correct only when exactly one candidate is selected from every color class. Selections that omit a color class or select several candidates from the same class will be handled by combining the two verifiers through the recognition framework developed below.

\section{Pointed Partition Disjunctive Composition}\label{sec:algebra}
To combine the two rejection conditions, selection invalidity and clique inconsistency, into a single label-connectivity test, we construct the following pointed disjunctive composition. The combined system is connected if and only if at least one factor system is connected.

\subsection{Pointed Sets and Partition Lifts}
Let $(X,x_0)$ and $(Y,y_0)$ be two pointed finite sets, where $x_0$ and $y_0$ are the basepoints, and let
\[
  X^\circ=X\setminus\{x_0\},
  \qquad
  Y^\circ=Y\setminus\{y_0\}.
\]
Define the pointed product ground set
\[
  Z=\{z_0\}\cup(X^\circ\times Y^\circ),
\]
where $z_0$ is a new basepoint. This ground set may be viewed as a finite-set version of the Cartesian product in which the basepoint row and basepoint column are contracted to a single basepoint.

Given $\alpha\in\Part(X)$, let $A_0$ be the block containing $x_0$. Define the first-coordinate lift
\[
  L_X(\alpha)\in\Part(Z)
\]
as follows:
\begin{itemize}
  \item the base block is
  \[
    B_0^X(\alpha)
    =\{z_0\}\cup\bigl((A_0\setminus\{x_0\})\times Y^\circ\bigr);
  \]
  \item for every non-base block $A\in\alpha\setminus\{A_0\}$ and every $y\in Y^\circ$, include the block $A\times\{y\}$.
\end{itemize}
Define $L_Y(\beta)$ symmetrically for $\beta\in\Part(Y)$.

Intuitively, in $L_X(\alpha)$, the first coordinate may move within a block of $\alpha$; the second coordinate can change through $z_0$ only when the first coordinate enters the base block of $\alpha$.

\begin{lemma}[The Lift Preserves Finite Joins]\label{lem:join-hom}
For every finite family $\alpha_1,\ldots,\alpha_r\in\Part(X)$,
\[
  \bigjoin_{j=1}^r L_X(\alpha_j)
  =L_X\!\left(\bigjoin_{j=1}^r\alpha_j\right).
\]
The symmetric statement holds for $L_Y$.
\end{lemma}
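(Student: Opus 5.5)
Both sides of the statement are partitions of $Z$, so I will show that they define the same equivalence relation. The plan is to describe $L_X(\alpha)$ as an explicit equivalence relation and then compute the transitive closure directly. For $\alpha\in\Part(X)$ with base block $A_0$, two points of $Z$ are $L_X(\alpha)$-equivalent in exactly two cases. Either both lie in $\{z_0\}\cup((A_0\setminus\{x_0\})\times Y^\circ)$, or they have the form $(x,y),(x',y)$ with $x,x'$ in a common non-base block of $\alpha$. It is convenient to set $\pi(z_0)=x_0$ and $\pi(x,y)=x$. In these terms, $z\sim z'$ in $L_X(\alpha)$ if and only if both of the following hold:
\begin{itemize}
  \item $\pi(z)\sim_\alpha \pi(z')$;
  \item either $\pi(z)\sim_\alpha x_0$, or $z,z'$ have the same second coordinate.
\end{itemize}
This single description handles the base and non-base cases uniformly. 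It also yields monotonicity at once: if $\alpha\preceq\alpha'$ then $L_X(\alpha)\preceq L_X(\alpha')$.

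Let $\gamma=\bigjoin_j\alpha_j$. The inclusion $\bigjoin_j L_X(\alpha_j)\preceq L_X(\gamma)$ follows from monotonicity, because $L_X(\gamma)$ is an upper bound of every $L_X(\alpha_j)$. The empty family is a separate, trivial check. There $L_X(\zero_X)$ has base block $\{z_0\}$ and all other blocks $\{(x,y)\}$, so $L_X(\zero_X)=\zero_Z$, which matches the convention for the empty join.

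For the reverse inclusion, take $z\sim z'$ in $L_X(\gamma)$. There are two cases.
\begin{itemize}
  \item \emph{Non-base case:} $z=(x,y)$ and $z'=(x',y)$ with $x\sim_\gamma x'$ and $x\not\sim_\gamma x_0$. By the definition of join there is a chain $x=u_0,u_1,\dots,u_m=x'$ in which consecutive $u$'s are $\alpha_{j}$-equivalent for some $j$. Every $u_s$ is $\gamma$-equivalent to $x$, so none of them equals $x_0$ and each $u_s\in X^\circ$. Since $u_s\sim_{\alpha_j}u_{s+1}$, the pair $(u_s,y),(u_{s+1},y)$ is $L_X(\alpha_j)$-equivalent, either through a common non-base block or through the base block. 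Chaining these steps gives $z\sim z'$ in the join.
  \item \emph{Base case:} it suffices to connect each $(x,y)$ with $x\sim_\gamma x_0$ to $z_0$. Take a chain $x=u_0,\dots,u_m=x_0$ of the same kind, cut at the first index $s$ with $u_{s+1}\sim_{\alpha_j}x_0$ for the relevant $j$ (this is immediate if $u_{s+1}=x_0$). Move $(u_0,y)$ along the chain with fixed $y$ as in the non-base case until reaching $(u_s,y)$ with $u_s\sim_{\alpha_j}u_{s+1}$. Then $u_s$ lies in the base block of $\alpha_j$, and $(u_s,y)$ is $L_X(\alpha_j)$-equivalent to $z_0$.
\end{itemize}

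The main obstacle is this bookkeeping at the basepoint. A chain in $X$ may pass through $x_0$ itself, and $x_0$ has no corresponding points in $Z$ other than $z_0$. Handling that passage is exactly why the argument stops at the first entry into a base block and jumps to $z_0$ there. Once it reaches $z_0$, the argument never needs to track the second coordinate again, since the whole base block of $L_X(\gamma)$ has $z_0$ as a common hub. Finally, the statement for $L_Y$ follows by symmetry, interchanging the roles of the two coordinates.
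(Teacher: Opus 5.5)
Your proof is correct and follows essentially the same route as the paper: in a non-base block you lift a join-chain while keeping the second coordinate fixed, and in the base case you jump to $z_0$ at the first step that enters a base block of some $\alpha_j$ (the minimality of $s$ is what guarantees $u_1,\dots,u_s\neq x_0$, so the lifted points exist in $Z$). The only difference is packaging: you obtain $\bigjoin_j L_X(\alpha_j)\preceq L_X(\gamma)$ from monotonicity of $L_X$ and the least-upper-bound property, a cleaner form of the paper's direct remark that no lifted block merges second coordinates over a non-base block.
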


\begin{proof}
If \(r=0\), both sides are the minimum partition \(0_Z\), since \(L_X(0_X)=0_Z\). Hence, the lemma holds for the empty family. In the remainder, assume \(r\ge 1\).
Let
\[
  \alpha^\star=\bigjoin_{j=1}^r\alpha_j,
\]
and let $A_0^\star$ be the base block of $\alpha^\star$. Every $\alpha_j$ refines $\alpha^\star$, so each block of $\alpha_j$ is entirely contained in a block of $\alpha^\star$.

First, let $A$ be a non-base block of $\alpha^\star$. If $x,x'\in A$, then, by the definition of join, there is a chain
\[
x=v_0,v_1,\ldots,v_d=x'
\]
such that, for every $h\in[d]$, the consecutive pair $v_{h-1},v_h$ lies in a common block of some $\alpha_j$. This chain cannot enter $A_0^\star$, since otherwise $A$ would merge with the base block in $\alpha^\star$. Consequently, for every fixed $y\in Y^\circ$, every lifted relation along the chain preserves the second coordinate $y$, and hence
\[
  (x,y)\sim(x',y).
\]
Conversely, while the first coordinate lies in the non-base block $A$, no block of any $L_X(\alpha_j)$ can contain two distinct second coordinates. Indeed, only the lift of a base block can merge second coordinates through $z_0$, and every such base block is contained in $A_0^\star$. Thus, for every $y\in Y^\circ$, $A\times\{y\}$ is exactly one block of the join on the left-hand side.

Now let $x\in A_0^\star\setminus\{x_0\}$. There is a chain of relations from the partitions $\alpha_j$ connecting $x$ to $x_0$. At the first step of the chain that enters the base block of some $\alpha_j$, the corresponding lifted block contains $z_0$ and all combinations of first coordinates in that base block with arbitrary second coordinates. Therefore, for all $y,y'\in Y^\circ$,
\[
  (x,y)\sim z_0\sim(x,y').
\]
Hence, the base block of the join on the left-hand side is exactly
\[
  \{z_0\}\cup\bigl((A_0^\star\setminus\{x_0\})\times Y^\circ\bigr).
\]
These blocks are precisely the blocks in the definition of $L_X(\alpha^\star)$.
\end{proof}

\subsection{Pointed Partition Disjunctive Composition}

We call the following operation the pointed partition disjunctive composition.
\begin{definition}[Pointed Partition Disjunctive Composition]
For $\alpha\in\Part(X)$ and $\beta\in\Part(Y)$, define
\[
  \alpha\odot\beta
  :=L_X(\alpha)\join L_Y(\beta)
  \in\Part(Z).
\]
\end{definition}

\begin{theorem}[Two-Factor Disjunctive Property]\label{thm:or}
We have
\[
  \alpha\odot\beta=\one_Z
  \quad\Longleftrightarrow\quad
  \alpha=\one_X\ \text{or}\ \beta=\one_Y.
\]
\end{theorem}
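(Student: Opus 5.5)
I will prove the two directions separately, working directly with the explicit block descriptions of $L_X(\alpha)$ and $L_Y(\beta)$ and viewing the join as the transitive closure of the union of the two equivalence relations. This is the same viewpoint used in Lemma~\ref{lem:ccjoin} and Lemma~\ref{lem:join-hom}.

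For the direction ``$\Leftarrow$'', I will assume $\alpha=\one_X$; the case $\beta=\one_Y$ is symmetric. Then the base block of $\alpha$ is all of $X$, so by definition
\[
  B_0^X(\alpha)=\{z_0\}\cup(X^\circ\times Y^\circ)=Z .
\]
Hence $L_X(\alpha)=\one_Z$, and joining with anything still gives $\one_Z$.

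For the direction ``$\Rightarrow$'', I will argue by contraposition. Suppose $\alpha\neq\one_X$ and $\beta\neq\one_Y$. Then there exist a non-base block $A$ of $\alpha$ and a non-base block $B$ of $\beta$; both are nonempty and lie inside $X^\circ$ and $Y^\circ$ respectively. I claim that the rectangle $A\times B\subseteq Z$ is a union of blocks of both $L_X(\alpha)$ and $L_Y(\beta)$.
\begin{itemize}
  \item In $L_X(\alpha)$, every point $(x,y)$ with $x\in A$ lies in the block $A\times\{y\}$. For $y\in B$ this block is contained in $A\times B$, so $A\times B=\bigcup_{y\in B}A\times\{y\}$.
  \item In $L_Y(\beta)$, every point $(x,y)$ with $y\in B$ lies in the block $\{x\}\times B$. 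For $x\in A$ this block is contained in $A\times B$, so $A\times B=\bigcup_{x\in A}\{x\}\times B$.
\end{itemize}
Consequently, $A\times B$ is closed under both equivalence relations, and hence under the transitive closure of their union. It is therefore a union of blocks of $\alpha\odot\beta$. Since $A\times B$ is nonempty and does not contain $z_0$, the partition $\alpha\odot\beta$ has at least two blocks, so $\alpha\odot\beta\neq\one_Z$.

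There is no real obstacle in this argument. The only point needing care is the bookkeeping in the ``$\Rightarrow$'' direction: one must check that no block of either lift leaves the rectangle $A\times B$. Only the base blocks mix coordinates through $z_0$, and $A$ and $B$ are disjoint from the respective base blocks, so this holds.
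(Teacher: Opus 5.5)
Your proof is correct and takes essentially the same route as the paper. The direction $\Leftarrow$ uses $L_X(\one_X)=\one_Z$. For $\Rightarrow$, you show that the nonempty rectangle $A\times B$ formed from non-base blocks is a union of blocks of both lifts, so it is closed under their join and never reaches $z_0$.
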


\begin{proof}
If $\alpha=\one_X$, then $L_X(\alpha)=\one_Z$, and hence $\alpha\odot\beta=\one_Z$. The case $\beta=\one_Y$ is symmetric.

Conversely, suppose that $\alpha\neq\one_X$ and $\beta\neq\one_Y$. Choose a non-base block $A$ of $\alpha$ and a non-base block $B$ of $\beta$. The set $A\times B$ is nonempty. In $L_X(\alpha)$, a point of $A\times B$ can move within $A$ only while keeping the second coordinate fixed; in $L_Y(\beta)$, it can move within $B$ only while keeping the first coordinate fixed. Thus, $A\times B$ is closed under both equivalence relations, and their transitive closure cannot reach $z_0$. Therefore, $\alpha\odot\beta\neq\one_Z$.
\end{proof}

\begin{theorem}\label{thm:family-or}
Let $\Phi:\Lambda\to\Part(X)$ and $\Psi:\Lambda\to\Part(Y)$ be two recognizers. Define the combined partition
\[
  \Theta(\lambda)
  =L_X(\Phi(\lambda))\join L_Y(\Psi(\lambda)).
\]
Then, for every $S\subseteq\Lambda$,
\[
  \widehat\Theta(S)=\one_Z
  \quad\Longleftrightarrow\quad
  \widehat\Phi(S)=\one_X
  \ \text{or}\
  \widehat\Psi(S)=\one_Y.
\]
Equivalently,
\[
  \widehat\Theta(S)\neq\one_Z
  \quad\Longleftrightarrow\quad
  \widehat\Phi(S)\neq\one_X
  \ \text{and}\
  \widehat\Psi(S)\neq\one_Y.
\]
\end{theorem}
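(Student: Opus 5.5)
The plan is to reduce the family statement to the two-factor property, Theorem~\ref{thm:or}, by using the fact that both lifts preserve finite joins (Lemma~\ref{lem:join-hom}). The key step is to show that
\[
  \widehat\Theta(S)=\widehat\Phi(S)\odot\widehat\Psi(S)
\]
for every $S\subseteq\Lambda$. Once this identity is established, Theorem~\ref{thm:or} applied to $\alpha=\widehat\Phi(S)$ and $\beta=\widehat\Psi(S)$ gives the first equivalence immediately. The second equivalence is its contrapositive, obtained by De Morgan's law.

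To prove the identity, I would expand the definitions. By definition,
\[
\widehat\Theta(S)=\bigjoin_{\lambda\in S}\bigl(L_X(\Phi(\lambda))\join L_Y(\Psi(\lambda))\bigr).
\]
The join in $\Part(Z)$ is associative, commutative and idempotent, so this finite join can be regrouped as
\[
\Bigl(\bigjoin_{\lambda\in S}L_X(\Phi(\lambda))\Bigr)\join\Bigl(\bigjoin_{\lambda\in S}L_Y(\Psi(\lambda))\Bigr).
\]
Applying Lemma~\ref{lem:join-hom} to each bracket, with the finite families $\{\Phi(\lambda)\}_{\lambda\in S}$ and $\{\Psi(\lambda)\}_{\lambda\in S}$, turns these brackets into $L_X(\widehat\Phi(S))$ and $L_Y(\widehat\Psi(S))$. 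Their join is $\widehat\Phi(S)\odot\widehat\Psi(S)$ by the definition of $\odot$.

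The only point needing care is the empty set $S=\varnothing$, which Lemma~\ref{lem:join-hom} covers explicitly through $L_X(\zero_X)=\zero_Z$. In that case $\widehat\Theta(\varnothing)=\zero_Z$, and both sides of the identity equal $\zero_Z$. The equivalence then still follows from Theorem~\ref{thm:or}, or directly: $\zero_Z\neq\one_Z$ as soon as $|Z|\ge2$, and correspondingly $\zero_X\neq\one_X$ and $\zero_Y\neq\one_Y$ hold under the same nondegeneracy. I would note this mild nondegeneracy assumption ($X^\circ,Y^\circ$ nonempty) where it is used; it is also implicit in Theorem~\ref{thm:or}, whose proof picks nonempty non-base blocks. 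I expect no real obstacle: the substantive work lies in Lemma~\ref{lem:join-hom} and Theorem~\ref{thm:or}, and this statement is a formal consequence of them.
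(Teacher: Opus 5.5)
Your proposal is correct and follows essentially the paper's proof: regroup the finite join using associativity and commutativity, apply Lemma~\ref{lem:join-hom} to each family to obtain $\widehat\Theta(S)=L_X(\widehat\Phi(S))\join L_Y(\widehat\Psi(S))$, and conclude by Theorem~\ref{thm:or}. The nondegeneracy caveat is unnecessary: if $\alpha\neq\one_X$, a nonempty non-base block already exists, and since $|Z|=1+|X^\circ|\,|Y^\circ|$, we have $\zero_Z=\one_Z$ exactly when $\zero_X=\one_X$ or $\zero_Y=\one_Y$, so the case $S=\varnothing$ holds unconditionally.
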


\begin{proof}
By associativity and commutativity of join and by Lemma~\ref{lem:join-hom},
\begin{align*}
  \widehat\Theta(S)
  &=\bigjoin_{\lambda\in S}
  \bigl(L_X(\Phi(\lambda))\join L_Y(\Psi(\lambda))\bigr)\\
  &=L_X\!\left(\widehat\Phi(S)\right)
    \join
    L_Y\!\left(\widehat\Psi(S)\right).
\end{align*}
The claim follows from \cref{thm:or}.
\end{proof}

The combined ground set has size
\[
  |Z|=1+(|X|-1)(|Y|-1).
\]
Under the semantics in which reaching the top element means rejection, \cref{thm:family-or} implements the logical disjunction of the rejection conditions. Under the semantics in which a non-top element means acceptance, it implements the logical conjunction of the acceptance conditions.

\section{The Multicolored-Clique Reduction}\label{sec:reduction}

\subsection{Combined Label Partitions}
For every candidate label $\lambda=(i,u)$, we have a selection-validity partition
\[
  E_\lambda=E_{i,u}\in\Part(X)
\]
and a clique-consistency partition
\[
  R_\lambda=R_{i,u}\in\Part(Y).
\]
On the pointed disjunctive-composition ground set
\[
  Z=\{z_0\}\cup
  \bigl((X\setminus\{\bot\})\times(Y\setminus\{s\})\bigr),
\]
define
\[
  P_\lambda
  =L_X(E_\lambda)\join L_Y(R_\lambda).
\]
For each $P_\lambda$, choose a spanning tree inside every block, obtaining a label graph $H_\lambda$ such that
\[
  \cc(H_\lambda)=P_\lambda.
\]
The output \DGLMC{} instance is
\[
  \left(Z,\{H_\lambda:\lambda\in V(G)\},a=k\right).
\]

\begin{theorem}[Reduction Correctness]\label{thm:reduction-correct}
A balanced \MCC{} instance $G$ contains a multicolored $k$-clique if and only if the constructed \DGLMC{} instance contains exactly $k$ labels whose union graph is disconnected.
\end{theorem}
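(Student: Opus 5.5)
The plan is to translate the disconnectedness question into the partition language and then apply the two verifiers through \cref{thm:family-or}. Fix any $S\subseteq V(G)$ with $|S|=k$, viewed as a set of labels $(i,u)$. By \cref{lem:ccjoin} and the choice $\cc(H_\lambda)=P_\lambda$, the union $\bigcup_{\lambda\in S}H_\lambda$ is disconnected if and only if $\widehat\Theta(S)=\bigjoin_{\lambda\in S}P_\lambda\neq\one_Z$. Here $\Theta$ is the combined recognizer built from $\Phi(\lambda)=E_\lambda$ and $\Psi(\lambda)=R_\lambda$. By \cref{thm:family-or}, this holds exactly when both $\widehat\Phi(S)\neq\one_X$ and $\widehat\Psi(S)\neq\one_Y$. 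The basepoints are $\bot$ for $X$ and $s$ for $Y$, which matches the definition of $Z$ used in the construction.

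For the forward direction, suppose $\{x_1,\ldots,x_k\}$ with $x_i\in V_i$ is a multicolored clique, and take $S=\{(i,x_i):i\in[k]\}$, so that $|S|=k$. \cref{lem:transversal} (or \cref{cor:exact-one}) gives $\widehat\Phi(S)\neq\one_X$. Since exactly one candidate is selected per class and the selection is a clique, \cref{lem:verifier} gives $\widehat\Psi(S)\neq\one_Y$. Hence $\widehat\Theta(S)\neq\one_Z$, and the union graph of these $k$ labels is disconnected.

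For the converse, suppose some $S$ with $|S|=k$ has a disconnected union. Then $\widehat\Phi(S)\neq\one_X$, and \cref{cor:exact-one} forces $S$ to contain exactly one candidate $x_i$ from each class $V_i$. The verifier is now in its regime of validity, so $\widehat\Psi(S)\neq\one_Y$ together with \cref{lem:verifier} shows that $\{x_1,\ldots,x_k\}$ is a clique, which is a multicolored $k$-clique.

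There is no real obstacle here, since all the work was done in the earlier lemmas. Two points need care. First, \cref{lem:verifier} must be invoked only after \cref{cor:exact-one} has established the transversal structure; this ordering is exactly why the disjunctive composition is needed. Second, the recognizer acceptance must be identified correctly with disconnectedness on the common vertex set $Z$. This uses that $H_\lambda$ realizes $P_\lambda$ as a spanning forest, and that \cref{thm:family-or} applies for arbitrary $S$ because of the join preservation in \cref{lem:join-hom}.
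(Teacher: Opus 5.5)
Your proof is correct and takes essentially the same route as the paper. You reduce disconnectedness to $\bigjoin_{\lambda\in S}P_\lambda\neq\one_Z$ via \cref{lem:ccjoin}, split that condition with \cref{thm:family-or}, and apply \cref{cor:exact-one} before invoking \cref{lem:verifier}; the paper simply writes the two directions as a single chain of equivalences over all $S$ with $|S|=k$.
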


\begin{proof}
Let
\[
  \mathcal K=\{S\subseteq V(G):|S|=k\},
\]
let $\mathcal D$ be the family of valid selections containing exactly one candidate from each color class, and let $\mathcal R\subseteq\mathcal D$ be the subfamily of selections forming a multicolored clique.

For every $S\in\mathcal K$, Corollary~\ref{cor:exact-one} gives
\[
  S\in\mathcal D
  \quad\Longleftrightarrow\quad
  \bigjoin_{\lambda\in S}E_\lambda\neq\one_X.
\]
By Lemma~\ref{lem:verifier}, whenever $S\in\mathcal D$,
\[
  S\in\mathcal R
  \quad\Longleftrightarrow\quad
  \bigjoin_{\lambda\in S}R_\lambda\neq\one_Y.
\]
Applying \cref{thm:family-or}, we obtain
\[
  S\in\mathcal R
  \quad\Longleftrightarrow\quad
  \bigjoin_{\lambda\in S}P_\lambda\neq\one_Z.
\]
Finally, by Lemma~\ref{lem:ccjoin} and $\cc(H_\lambda)=P_\lambda$, the right-hand side is equivalent to the graph $\bigcup_{\lambda\in S}H_\lambda$ being disconnected.
\end{proof}

\begin{theorem}[Size of the Reduction]\label{thm:reduction-size}
The output of the reduction satisfies
\[
  p=kq
\]
and
\[
  n=1+\ell2^k(1+r),
\]
where $r$ is the number of nonedges between distinct color classes and $\ell$ is the prime used by the additive-shift verifier. In particular,
\[
  n=O\!\left(k^2q^2(q2^k+4^k)\right)
  =q^{O(1)}2^{O(k)}.
\]
Each label graph can be realized with at most $n-1$ edges, the total number of edges is $O(np)$, and the reduction can be constructed in polynomial time in the explicit output size.
\end{theorem}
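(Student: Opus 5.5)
The size bound follows by direct counting from the definitions in \cref{sec:selector}, \cref{sec:verifier}, and \cref{sec:algebra}, together with a short argument about construction time.

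\textbf{Labels.} Every candidate vertex $u\in V_i$ contributes exactly one label $\lambda=(i,u)$ and one graph $H_\lambda$. Since $|V_i|=q$ for each of the $k$ classes, we get $p=kq$.

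\textbf{Vertex count.} The ground set of the output is $Z$. By the formula after \cref{thm:family-or}, $|Z|=1+(|X|-1)(|Y|-1)$. Substituting $|X|=1+\ell2^k$ and $|Y|=2+r$ gives
\[
n=1+\ell2^k(1+r).
\]
For the asymptotic form we use two bounds:
\begin{itemize}
\item $\ell<2\max\{q,2^k\}$, so $\ell2^k=O(q2^k+4^k)$;
\item $r\le\binom{k}{2}q^2$, so $1+r=O(k^2q^2)$.
\end{itemize}
Multiplying yields $n=O(k^2q^2(q2^k+4^k))$. Since $k^2\le 4^k$, this is $q^{O(1)}2^{O(k)}$.

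\textbf{Edges.} Each $H_\lambda$ is a spanning forest of the blocks of $P_\lambda$ on the $n$ points of $Z$, so it has at most $n-1$ edges. Summing over the $p$ labels gives $O(np)$ edges in total.

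\textbf{Construction time.} This is the only step needing care, because the join $P_\lambda=L_X(E_\lambda)\join L_Y(R_\lambda)$ must actually be computed. The steps, in order, are:
\begin{enumerate}
\item Find the prime $\ell$ as in the Remark of \cref{sec:selector}. This takes time polynomial in $|X|$.
\item For each label, build $J_{i,a}$ and $Q_{i,u}$. Both have size polynomial in $|X|$ and $|Y|$. Compute their components $E_\lambda$ and $R_\lambda$ by graph search.
\item Write down the lifts $L_X(E_\lambda)$ and $L_Y(R_\lambda)$ blockwise from their definitions, in $O(|Z|)$ time each.
\item Compute the join as the connected components of the union of two spanning forests on $Z$, using a union--find or BFS in time near-linear in $n$.
\end{enumerate}
Repeating this for all $p$ labels costs $\poly(n,p)$. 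Because $p\le n$ (as $q\le\ell<n$ and $k\le 2^k<n$), this is polynomial in the explicit output size, which is $\Theta(n+|E|)\ge n$.

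I expect no real obstacle here. The only step requiring care is step 4, checking that the join is computable efficiently, and this follows from the equivalence of join with union-connectivity in \cref{lem:ccjoin}.
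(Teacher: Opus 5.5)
Your proposal is correct and follows the paper's proof essentially step for step: the same count $|Z|=1+(|X|-1)(|Y|-1)$ with $\ell<2\max\{q,2^k\}$ and $r\le\binom{k}{2}q^2$, spanning forests for the edge bound, and a polynomial-time construction, which you justify in more detail than the paper does. One cosmetic slip: $q<n$ and $k<n$ taken separately only give $p<n^2$, whereas $p<n$ follows from $kq\le 2^k\ell<n$; either bound suffices for your conclusion.
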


\begin{proof}
By construction,
\[
  |X|=1+\ell2^k,
  \qquad
  |Y|=2+r.
\]
The pointed disjunctive-composition ground set therefore has
\[
  n=1+(|X|-1)(|Y|-1)
   =1+\ell2^k(1+r)
\]
vertices. Since
\[
  r\le\binom{k}{2}q^2
  \qquad\text{and}\qquad
  \ell<2\max\{q,2^k\},
\]
we obtain
\[
  n=O\!\left(k^2q^2(q2^k+4^k)\right).
\]
Each partition $P_\lambda$ can be realized by choosing a spanning tree in every block, using at most $n-1$ edges. Hence, the total number of edges is at most $p(n-1)$. All states, blocks, and realizing forests can be enumerated in time polynomial in the explicit output size. Therefore, the entire reduction is deterministic and runs in polynomial time in its explicit output size.
\end{proof}

\begin{corollary}[Reduction to \GLMC]\label{cor:to-primal}
Let $b=p-k$. Assign label $\lambda$ to every edge of $H_\lambda$. If different label graphs use an edge on the same unordered pair of vertices, retain them as parallel edges with different labels. The resulting \GLMC{} instance is a \Yes-instance if and only if the original balanced \MCC{} instance is a \Yes-instance.
\end{corollary}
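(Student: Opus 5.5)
The argument composes \cref{thm:reduction-correct} with the dual equivalence of \cref{lem:dual}. The only point needing care is that the graph built here is a multigraph whose label graphs are exactly the $H_\lambda$.

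First, I will name the objects. Let $G'$ be the multigraph on vertex set $Z$ whose edge multiset is the disjoint union of the edge sets of the $H_\lambda$. An edge coming from $H_\lambda$ gets label $\chi(e)=\lambda$. Parallel edges are kept, so no edge is lost or relabeled. The label set is $V(G')$'s label set $\Lambda=V(G)$ (candidates), of size $p=kq$. For each label $\lambda$, the label graph $(Z,E_\lambda)$ of $G'$ in the sense of the preliminaries is then literally $H_\lambda$, since the spanning-forest edges chosen for $P_\lambda$ all carry label $\lambda$. The budget is $b=p-k\in\{0,\ldots,p\}$, because $k\le kq$.

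Second, I will apply \cref{lem:dual} with $a=p-b=k$. The \GLMC{} instance $(G',\chi,b)$ is a \Yes-instance if and only if some $I\subseteq\Lambda$ with $|I|=k$ has $\bigcup_{\lambda\in I}H_\lambda$ disconnected. This is precisely the statement that the \DGLMC{} instance $(Z,\{H_\lambda\},k)$ is a \Yes-instance. By \cref{thm:reduction-correct}, that holds if and only if the balanced \MCC{} instance has a multicolored $k$-clique.

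The main point to check is that \cref{lem:dual} applies verbatim to multigraphs with parallel differently-labeled edges. The \GLMC{} problem is defined on multigraphs, so it does. It is also worth confirming the dual equivalence directly. A nontrivial $S$ with $|\chi(\delta(S))|\le b$ means that at least $p-b$ labels have no edge crossing $S$. Any $k$ of those labels have a union with no edge leaving $S$, so the union is disconnected. Conversely, a disconnected union of $k$ label graphs yields a component $S$ that no edge of those labels crosses, so at most $p-k$ labels are cut. I will also note that $Z$ has at least two vertices, so nontrivial cuts exist; this follows because $|X|,|Y|\ge 2$.
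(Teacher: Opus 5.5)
Your proposal is correct and takes the same approach as the paper: the paper's proof simply combines Lemma~\ref{lem:dual} (with $a=p-b=k$) and \cref{thm:reduction-correct}. Your additional checks are sound but not needed for that argument: that the label graphs of the multigraph are exactly the $H_\lambda$, that $b\in\{0,\ldots,p\}$, the direct re-derivation of the dual equivalence, and $|Z|\ge 2$.
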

\begin{proof}
This follows from Lemma~\ref{lem:dual} and \cref{thm:reduction-correct}.
\end{proof}

\section{From Sparse \texorpdfstring{$3$-SAT}{3-SAT} to the Final Conditional Lower Bound}\label{sec:sat}
To derive a lower bound stated only in terms of the target input parameters $n$ and $p$, we directly construct a balanced \MCC{} instance with parameters in the required range.

Let $F$ be a sparse $3$-CNF formula with $N$ variables and $M\le cN$ clauses, where $c$ is the constant from Proposition~\ref{prop:sparse-sat}. Choose an integer $k\ge2$ and partition the clauses arbitrarily into $k$ groups
\[
  \mathcal C_1,\ldots,\mathcal C_k,
\]
each containing at most $\lceil M/k\rceil$ clauses. Let $X_i$ be the set of variables appearing in the clauses of group $i$. Then
\[
  |X_i|
  \le3\left\lceil\frac{M}{k}\right\rceil
  \le\frac{3cN}{k}+3.
\]

The valid candidates in color class $V_i$ are all local assignments
\[
  \sigma_i:X_i\to\{0,1\}
\]
that satisfy every clause in $\mathcal C_i$. For $i\neq j$, connect $\sigma_i\in V_i$ and $\sigma_j\in V_j$ if and only if they agree on their common variables:
\[
  \sigma_i\restr(X_i\cap X_j)
  =
  \sigma_j\restr(X_i\cap X_j).
\]

\begin{lemma}[Correctness of the Grouping Reduction]\label{lem:sat-mcc}
The formula $F$ is satisfiable if and only if the resulting $k$-partite graph contains a multicolored $k$-clique.
\end{lemma}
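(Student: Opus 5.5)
The proof has two directions, and both work directly with local assignments.

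For the forward direction, fix a satisfying assignment $\tau:\{x_1,\ldots,x_N\}\to\{0,1\}$ of $F$. For each $i\in[k]$, set $\sigma_i=\tau\restr X_i$. Every clause $C\in\mathcal C_i$ has all its variables in $X_i$, and $\tau$ satisfies $C$. Hence $\sigma_i$ satisfies every clause of $\mathcal C_i$, so $\sigma_i\in V_i$. For $i\neq j$, both $\sigma_i$ and $\sigma_j$ are restrictions of the single assignment $\tau$. They therefore agree on $X_i\cap X_j$, which makes them adjacent. So $\{\sigma_1,\ldots,\sigma_k\}$ is a multicolored $k$-clique.

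For the converse, let $\{\sigma_1,\ldots,\sigma_k\}$ be a multicolored clique with $\sigma_i\in V_i$. I would define a global assignment $\tau$ as follows:
\[
  \tau(x)=\sigma_i(x)\ \text{for any } i \text{ with } x\in X_i,
  \qquad
  \tau(x)=0\ \text{if } x\notin\bigcup_i X_i .
\]
The main point to check is that $\tau$ is well defined. Suppose $x\in X_i\cap X_j$ with $i\neq j$. Then $\sigma_i$ and $\sigma_j$ are adjacent, and adjacency means they agree on $X_i\cap X_j$. So $\sigma_i(x)=\sigma_j(x)$, and the choice of $i$ does not matter. Thus $\tau\restr X_i=\sigma_i$ for every $i$.

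Now take any clause $C$ of $F$. It lies in some group $\mathcal C_i$, and all its variables lie in $X_i$. Because $\sigma_i\in V_i$, it satisfies $C$, and so $\tau$ satisfies $C$. Every clause is satisfied, so $F$ is satisfiable.

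There is no real obstacle in this argument. The only subtle step is the well-definedness of $\tau$, which relies on pairwise agreement. Pairwise agreement is enough because each variable's value is fixed by its own membership and checked pair by pair, so no global consistency condition beyond the pairwise one is needed.

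Two degenerate cases should be noted. Variables appearing in no clause are assigned arbitrarily, which is harmless. Some $V_i$ may be empty, or may have fewer than the balanced size $q$. In that case the clique statement is read with $V_i$ empty. The lemma itself does not need balance; balancing by padding, for example with isolated dummy vertices, is a separate step.
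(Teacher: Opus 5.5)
Your proposal is correct and follows the same route as the paper: restrict a satisfying assignment to each $X_i$ for the forward direction, and for the converse merge the pairwise-consistent local assignments into one assignment (unused variables set arbitrarily) that satisfies every clause group. Your explicit well-definedness check for the merged assignment spells out exactly what the paper compresses into the phrase ``pairwise consistency guarantees.''
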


\begin{proof}
If $\tau$ is a satisfying assignment of $F$, then $\tau\restr X_i$ is a genuine candidate in $V_i$, and any two such restrictions agree on their common variables. Hence, they form a multicolored clique.

Conversely, suppose that $\sigma_i\in V_i$ form a multicolored clique. Pairwise consistency guarantees that, for every variable appearing in several sets $X_i$, all local assignments give it the same value. Thus, the local assignments can be merged into a single assignment on $\bigcup_iX_i$. This assignment satisfies every clause in every group and therefore satisfies $F$. Variables not appearing in the formula can be assigned arbitrarily.
\end{proof}

\subsection{Balancing and Construction Time}
Choose a constant $C>3c+4$ and let
\[
  q=2^{\lceil CN/k\rceil}.
\]
For all sufficiently large $N$,
\[
  |V_i|
  \le2^{|X_i|}
  \le q.
\]
Add enough isolated dummy vertices to every color class to make its size exactly $q$. When $k\ge2$, no isolated dummy vertex can belong to a multicolored $k$-clique, so balancing preserves equivalence. Therefore,
\[
  |V_i|=q,
  \qquad
  \log q=\Theta(N/k).
\]

Enumerating all local assignments in each group takes $2^{O(N/k)}\poly(N,k)$ time. Checking consistency for all candidate pairs between color classes takes $O(k^2q^2\poly(N))=2^{O(N/k)}\poly(N,k)$ time. Hence, the balanced \MCC{} instance can be constructed explicitly in
\[
  2^{O(N/k)}\poly(N,k)
\]
time.

\subsection{Parameter Choice and Proof of the Lower Bound}\label{sec:parameters}

\subsubsection{Parameter Choice}
For all sufficiently large $N$, let
\[
  k=\lfloor\sqrt N\rfloor.
\]
Then
\[
  \frac{N}{k}=\Theta(k),
  \qquad
  N=\Theta(k^2).
\]
By the balancing step above,
\[
  \log q=\Theta(N/k)=\Theta(k).
\]
The prime $\ell$ used by the selector satisfies
\[
  \max\{q,2^k\}<\ell<2\max\{q,2^k\},
\]
and hence
\[
  \log\ell=\Theta(k).
\]

\subsubsection{Output Size}
By \cref{thm:reduction-size},
\[
  n=1+\ell2^k(1+r),
  \qquad
  r\le\binom{k}{2}q^2.
\]
For the lower bound, $\ell>2^k$ gives
\[
  n\ge1+\ell2^k>4^k,
\]
so
\[
  \log n=\Omega(k).
\]
For the upper bound, $q=2^{\Theta(k)}$, $\ell=2^{\Theta(k)}$, and $r\le k^2q^2$, whence
\[
  n=2^{O(k)}.
\]
Therefore,
\[
  \log n=\Theta(k).
\]
The number of labels $p=kq$ satisfies
\[
  \log p=\Theta(k),
  \qquad
  \log(np)=\Theta(k).
\]
In particular, $p=n^{O(1)}$ on the hard family.

\subsubsection{Conditional Lower Bound for Multigraphs}
We first prove \cref{thm:main} for multigraphs.

\begin{proof}[Proof of \cref{thm:main} for multigraphs]
Suppose that there is a deterministic \GLMC{} algorithm with running time
\[
  (np)^{o(\log n)}\operatorname{poly}(|E|).
\]
For the instances constructed above, we have $\log n=\Theta(k)$. Therefore, on this family of instances, an exponent that is \(o(\log n)\) is also \(o(k)\).

Since every label graph is realized by a forest, the total number of edges is $O(np)$. The binary logarithm of the running time $T$ therefore satisfies
\begin{align*}
  \log T
  &\le o(\log n)\cdot\log(np)+O(\log|E|)\\
  &=o(k)\cdot\Theta(k)+O(k)\\
  &=o(k^2)\\
  &=o(N).
\end{align*}
Thus, $T=2^{o(N)}$.

The balanced \MCC{} instance is constructed in
\[
  2^{O(N/k)}\poly(N,k)=2^{O(k)}\poly(N)
\]
time, and the subsequent reduction runs in time polynomial in the output size, which is $2^{O(k)}$. Hence the complete reduction overhead is $2^{O(k)}\poly(N)=2^{o(N)}$. We would therefore obtain a deterministic $2^{o(N)}$-time algorithm for sparse $3$-SAT, contradicting Proposition~\ref{prop:sparse-sat}.
\end{proof}

\subsection{Transferring the Lower Bound from Multigraphs to Simple Graphs}
\label{subsec:simple-graph-reduction}

Different label graphs in our construction may use edges on the same pair of vertices with different labels, so the reduction naturally produces a multigraph. We now give a polynomial-size, optimum-preserving transformation to simple graphs, thereby transferring the lower bound from multigraphs to simple graphs.

Let
\[
I=(G,\chi,b)
\]
be a \textsc{Global Label Min-Cut} instance, where $G=(V,E)$ is an undirected edge-labeled multigraph,
\[
\chi:E\to[p]
\]
is the labeling function, and $b\leq p$ is the budget. Instances with $|V|\leq1$ are degenerate and can be decided directly, so throughout this subsection we assume that $|V|\geq2$.

Let
\[
n=|V|, \qquad m=|E|,
\]
where parallel edges are counted separately. Self-loops never cross a cut and can therefore be deleted in advance. Hence, after this preprocessing, every edge $e=u_ev_e$ satisfies $u_e\neq v_e$.

\begin{lemma}[Simple-Graph Transformation]
\label{lem:multigraph-to-simple}
Let $I=(G,\chi,b)$ be a loopless instance with $|V|\geq2$ and $b\leq p$. In polynomial time, one can construct an instance on a simple edge-labeled graph
\[
I^\sharp=(G^\sharp,\chi^\sharp,b)
\]
such that
\[
I\text{ is a \Yes-instance}\quad\Longleftrightarrow\quad I^\sharp\text{ is a \Yes-instance}.
\]
Moreover,
\[
\operatorname{OPT}(G^\sharp,\chi^\sharp)=\operatorname{OPT}(G,\chi).
\]
\end{lemma}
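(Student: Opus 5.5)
Subdividing each edge once makes the graph simple, but the two new edges need labels, and a cut that separates the subdivision vertex from one endpoint would create a new cheap cut. So I will subdivide each edge $e=u_ev_e$ with a path of new vertices carrying label $\chi(e)$, and additionally make each subdivision vertex "heavy" so that it can never be separated from its path cheaply. Concretely, I will replace each edge $e$ by a new vertex $w_e$ with edges $u_ew_e$ and $w_ev_e$, both labeled $\chi(e)$. Two edges with the same label then cross every cut that separates $w_e$ from exactly one of $u_e,v_e$, but this costs only one label. The danger is a cut $S=\{w_e\}$, which costs $|\{\chi(e)\}|=1$ and could be below $\operatorname{OPT}(G,\chi)$.

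To prevent this, I will attach to each $w_e$ edges to many vertices carrying distinct labels. It is cleaner to keep the label set unchanged and exploit the original structure instead. I will replace each edge $e$ by a ``thick'' gadget: $w_e$ is joined to both $u_e$ and $v_e$ by $\chi(e)$, and is also joined to every original vertex $x\in V\setminus\{u_e,v_e\}$? That would change cuts of the original graph. So instead I will blow up each original vertex $v$ into a clique $K_v$ of size $B=p+1$ whose internal edges carry all labels, assigned so that any cut splitting $K_v$ must cut more than $p$ labels. No single label can do this, because a cut of a clique on $B$ vertices with edges of one label costs $1$. Hence the clique edges must be labeled so that every nontrivial split of $K_v$ sees at least $b+1$ labels. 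I will realize this with $b+1$ edge-disjoint spanning structures: take $K_v$ on $2(p+1)$ vertices and decompose it into $p+1$ edge-disjoint Hamiltonian paths (Walecki), assigning the $j$-th path a label $\lambda_j$. This requires $p+1$ labels, so I will add fresh labels, which changes $p$ but not OPT provided every fresh label's edges lie inside gadgets that the optimal cut never splits. Since the budget is $b\le p$, any cut splitting a gadget costs at least $p+1>b$, and at least $p+1>\operatorname{OPT}$ because $\operatorname{OPT}\le p$ holds whenever $|V|\ge 2$.

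The key steps, in order, are as follows.
\begin{enumerate}
\item Construct $G^\sharp$. Each original vertex $v$ becomes a gadget $K_v$ on $2(p+1)$ vertices with Walecki path labels from $p+1$ fresh labels. Each original edge $e$ becomes the edge $u_e'v_e'$ between distinct chosen representatives, labeled $\chi(e)$. Representatives are chosen injectively per incident edge, so parallel edges land on different vertex pairs; the gadget needs at least $\deg(v)$ vertices, so I will take its size to be $\max\{2(p+1),2\lceil \deg(v)/2\rceil\}$, with enough paths. A clique edge might coincide with a cross-gadget pair, but this cannot happen because clique edges are internal. The result is simple.
\item Show that a cut $S^\sharp$ splitting some gadget costs at least $p+1$ labels, since every Hamiltonian path crosses it.
\item Show that cuts not splitting any gadget correspond bijectively to cuts $S$ of $G$ with the same set of crossing original labels, giving $\operatorname{OPT}(G^\sharp)=\min(\operatorname{OPT}(G),\ge p+1)=\operatorname{OPT}(G)$. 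The equivalence of Yes-instances with the same $b$ follows.
\end{enumerate}

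The main obstacle is keeping the fresh labels from affecting the answer while making gadgets unbreakable. I expect this to be handled by the bound $\operatorname{OPT}(G)\le p<p+1$, with the size remaining polynomial: $O(n(p+m))$ vertices.
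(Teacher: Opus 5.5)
Your proof is correct, but it uses a different gadget from the paper's.

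The paper works edge by edge. For each original edge $e=u_ev_e$ it adds a private vertex $x_e$ and helpers $w_{e,1},\ldots,w_{e,p}$. It places a protection clique $K_{p+2}$ on $\{u_e,x_e,w_{e,1},\ldots,w_{e,p}\}$ whose $\binom{p+2}{2}$ edges all carry pairwise distinct fresh labels (distinct across gadgets as well), and it moves the original label to the pendant edge $x_ev_e$. Simplicity is then automatic, because every original edge now has a private endpoint. Protection follows because every nontrivial cut of $K_{p+2}$ has at least $p+1$ edges, each with its own label.

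You instead blow up each vertex $v$ into a protected cluster $K_v$. Parallel edges are separated by attaching them to distinct ports, and protection comes from $p+1$ label classes that each span $K_v$ (the Walecki paths).

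The closing logic is the same in both proofs. A cut splitting a gadget costs at least $p+1$, which exceeds $\operatorname{OPT}(G,\chi)$ because $|V|\ge 2$ forces $\operatorname{OPT}(G,\chi)\le p$. Cuts that split no gadget correspond to cuts of $G$ with exactly the same set of crossing original labels; in your construction this is even a bijection.

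What your route buys is economy. The splitting bound is local to each gadget, so the same fresh labels can be reused in every $K_v$. You therefore need only $O(p+\Delta)$ new labels, where $\Delta$ is the maximum degree, and $O(np+m)$ vertices. The paper uses $m\binom{p+2}{2}$ new labels and $m(p+1)$ new vertices.

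What the paper's route buys is that no decomposition theorem is needed. Walecki is optional for you too: labeling every edge of $K_v$ distinctly, or using any $p+1$ edge-disjoint spanning trees, works just as well.

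Both versions are polynomial and suffice for \cref{cor:simple-graph-hardness}. In a write-up, drop the abandoned subdivision idea from your first paragraph.
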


\begin{proof}
For every original edge
\[
e=u_ev_e\in E,\qquad \chi(e)=\lambda_e,
\]
arbitrarily designate one endpoint as $u_e$ and the other as $v_e$. Introduce new vertices
\[
x_e,w_{e,1},\ldots,w_{e,p},
\]
and let
\[
C_e=\{u_e,x_e,w_{e,1},\ldots,w_{e,p}\}.
\]
Thus,
\[
|C_e|=p+2.
\]

Add a complete graph $K_{p+2}$ on $C_e$. Assign a protection label to every edge of this complete graph, subject to the following requirements:
\begin{enumerate}
  \item the protection labels within the same $C_e$ are pairwise distinct;
  \item protection labels used for different edge gadgets are pairwise distinct;
  \item every protection label is distinct from every original label.
\end{enumerate}
Finally, add the edge
\[
x_ev_e
\]
with the original label $\lambda_e$, and delete the original edge $u_ev_e$.

Every vertex among $x_e,w_{e,1},\ldots,w_{e,p}$ belongs only to the gadget of edge $e$, so distinct gadgets cannot create edges with the same pair of endpoints. Moreover, $x_e\notin V$ and $u_e\neq v_e$, so the edge $x_ev_e$ cannot coincide with an edge of the protection clique. Therefore, $G^\sharp$ is simple.

First, let
\[
S\mid V\setminus S
\]
be an arbitrary nontrivial cut of the original graph. For every original edge $e=u_ev_e$, place
\[
x_e,w_{e,1},\ldots,w_{e,p}
\]
on the same side as $u_e$. Then the entire protection clique $C_e$ lies on one side of the cut, so no protection label crosses the cut. The only edge of this gadget that can cross is
\[
x_ev_e.
\]
Since $x_e$ and $u_e$ lie on the same side,
\[
x_ev_e\text{ crosses the cut} \quad\Longleftrightarrow\quad u_ev_e\text{ crosses the original cut}.
\]
The two edges carry the same original label $\lambda_e$. Hence, the extended cut crosses exactly the same set of labels as the original cut, and therefore
\[
\operatorname{OPT}(G^\sharp,\chi^\sharp)\le\operatorname{OPT}(G,\chi).
\]

Conversely, consider any nontrivial cut of $G^\sharp$ of cost at most $p$. Every nontrivial cut of $K_{p+2}$ crosses at least $p+1$ edges: if one side contains $r$ vertices, then the number of crossing edges is
\[
r(p+2-r)\ge p+1.
\]
Since every edge of a protection clique has a distinct protection label, any cut splitting some $C_e$ crosses at least $p+1$ distinct labels. Therefore, no cut of cost at most $p$ can split any protection clique $C_e$.

Consequently,
\[
u_e,x_e,w_{e,1},\ldots,w_{e,p}
\]
all lie on the same side of such a cut, and hence
\[
x_ev_e\text{ crosses the cut}\quad\Longleftrightarrow\quad u_ev_e\text{ crosses the induced cut on the original vertex set }V.
\]

The induced cut on $V$ must be nontrivial. Otherwise, if all original vertices lie on the same side, then every protection clique, which cannot be split, must lie entirely on that side as well. All new vertices would then lie on the same side, contradicting the assumption that the cut of $G^\sharp$ is nontrivial.

Thus, every simple-graph cut of cost at most $p$ induces a nontrivial cut in the original graph, and the two cuts cross exactly the same set of original labels. Since every cut in the original graph crosses at most all $p$ original labels,
\[
\operatorname{OPT}(G,\chi)\le p.
\]
The first direction therefore implies that $G^\sharp$ has a cut of cost at most $p$, so an optimum cut in $G^\sharp$ cannot split a protection clique. We conclude that
\[
\operatorname{OPT}(G^\sharp,\chi^\sharp)=\operatorname{OPT}(G,\chi).
\]

In particular, for every $b\le p$,
\[
G\text{ has a label cut of cost at most }b\quad\Longleftrightarrow\quad G^\sharp\text{ has a label cut of cost at most }b.
\]
\end{proof}

Each original edge introduces $p+1$ vertices, and hence
\[
n^\sharp=n+m(p+1).
\]
The protection clique $C_e$ contains
\[
\binom{p+2}{2}
\]
edges, each carrying a fresh protection label. Therefore,
\[
p^\sharp=p+m\binom{p+2}{2}.
\]
The number of edges is
\[
m^\sharp=m+m\binom{p+2}{2}.
\]
Consequently,
\[
n^\sharp,p^\sharp,m^\sharp\le (n+m+p)^{O(1)}.
\]
This is a polynomial-time, optimum-preserving reduction.

\begin{corollary}[Conditional Lower Bound for Simple Graphs]
\label{cor:simple-graph-hardness}
Suppose that a hard family of multigraph instances satisfies
\[
  m\le(np)^{O(1)}
\]
and $p\le n^{O(1)}$. If \GLMC{} on edge-labeled simple graphs admitted a deterministic algorithm with running time
\[
  (np)^{o(\log n)}\operatorname{poly}(|E|),
\]
then this would yield an algorithm in the same asymptotic running-time class on the original multigraph family.
\end{corollary}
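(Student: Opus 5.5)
The plan is to compose the simple-graph transformation of \cref{lem:multigraph-to-simple} with the hypothetical simple-graph algorithm, and then check that the parameter blow-up leaves the running time in the same class. Given a multigraph instance $I=(G,\chi,b)$ from the hard family with $n$ vertices, $p$ labels and $m\le(np)^{O(1)}$ edges, I first delete self-loops. This preserves every cut value. Degenerate instances with $|V|\le1$ are answered directly.

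Next I build $I^\sharp$ using \cref{lem:multigraph-to-simple} and run the assumed algorithm on it. The lemma guarantees that the answer is unchanged. The construction is explicit, so its cost is polynomial in $n^\sharp+p^\sharp+m^\sharp$.

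The main step is the parameter bookkeeping. From the size formulas I get
\[
n\le n^\sharp\le n+m(p+1),\qquad p^\sharp\le p+m\binom{p+2}{2},\qquad m^\sharp\le m\Bigl(1+\binom{p+2}{2}\Bigr).
\]
Combining these with $m\le(np)^{O(1)}$ and $p\le n^{O(1)}$ gives $n^\sharp p^\sharp\le n^{O(1)}$. I also need $\log n^\sharp=\Theta(\log n)$, which follows from $n\le n^\sharp\le n^{O(1)}$. The assumed running time on $I^\sharp$ is
\[
(n^\sharp p^\sharp)^{h(n^\sharp,p^\sharp)}\poly(m^\sharp),
\]
where, for every $\varepsilon>0$, $h\le\varepsilon\log n^\sharp$ once $n^\sharp$ is large enough. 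Since $n^\sharp\ge n$, largeness of $n$ implies largeness of $n^\sharp$. So the exponent is at most $\varepsilon\cdot O(\log n)$ and the base is $n^{O(1)}$, giving a total of $2^{\varepsilon\,O(\log^2 n)}$. Because $\varepsilon$ is arbitrary, this is $n^{o(\log n)}\le(np)^{o(\log n)}$. The factor $\poly(m^\sharp)$ is $n^{O(1)}=(np)^{o(\log n)}$, and the construction time is polynomial in the same quantities, so it is absorbed as well.

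The main obstacle is the uniformity convention for $o(\log n)$. It is stated uniformly over $p$, but here $p^\sharp$ is a function of $n$, so I must make sure $h(n^\sharp,p^\sharp)$ is controlled by $n$ alone. I will handle this by bounding everything through $\log n^\sharp=\Theta(\log n)$, which the hypotheses $m\le(np)^{O(1)}$ and $p\le n^{O(1)}$ provide. Without those hypotheses the blow-up $m\binom{p+2}{2}$ could be superpolynomial in $n$ and the argument would fail.

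This yields the claimed $(np)^{o(\log n)}\poly(|E|)$ bound on the multigraph family. On the hard family of \cref{sec:parameters} we have $p=n^{O(1)}$ and $m=O(np)$, so the multigraph proof of \cref{thm:main} transfers to simple graphs.
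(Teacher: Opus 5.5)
Your proposal is correct and follows essentially the same route as the paper: compose the transformation of Lemma~\ref{lem:multigraph-to-simple} with the assumed simple-graph algorithm, use $n\le n^\sharp\le n^{O(1)}$ and $p^\sharp,m^\sharp\le n^{O(1)}$ to get $\log n^\sharp=\Theta(\log n)$ and $\log(n^\sharp p^\sharp)=O(\log n)$, and absorb the $\poly(m^\sharp)$ factor into $(np)^{o(\log n)}$, together with the polynomial construction time, which you account for explicitly and the paper leaves implicit. The uniformity over $p$ in the $o(\log n)$ convention is not really an obstacle but actually helps, since it only requires $n^\sharp\ge n_\varepsilon$, and this follows from $n^\sharp\ge n$.
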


\begin{proof}
By Lemma~\ref{lem:multigraph-to-simple}, the transformation preserves the \Yes/\No{} answer and satisfies
\[
  n^\sharp p^\sharp=(np)^{O(1)}.
\]
Moreover, $n^\sharp\ge n$, while the assumptions $m\le(np)^{O(1)}$ and $p\le n^{O(1)}$ imply
\[
  n^\sharp,p^\sharp\le n^{O(1)}.
\]
Consequently,
\[
  \log n^\sharp=\Theta(\log n),
  \qquad
  \log(n^\sharp p^\sharp)=\Theta(\log n)=\Theta(\log(np)).
\]
Since
\[
  m^\sharp\le (np)^{O(1)}
\]
on the hard family, an algorithm with running time
\[
  (n^\sharp p^\sharp)^{o(\log n^\sharp)}\operatorname{poly}(m^\sharp)
\]
would satisfy
\begin{align*}
  \log T^\sharp
  &\le o(\log n^\sharp)\log(n^\sharp p^\sharp)+O(\log m^\sharp)\\
  &=o(\log n)\cdot\Theta(\log(np))+O(\log(np))\\
  &=o(\log n)\log(np).
\end{align*}
Consequently, the transformed algorithm would run in
\[
  (np)^{o(\log n)}
\]
time, contradicting the multigraph lower bound.
\end{proof}

For the hard family constructed in this paper, \cref{thm:reduction-size} gives $m=O(np)$, while \cref{sec:parameters} gives $\log p=\Theta(\log n)$ and hence $p\le n^{O(1)}$. Combining the multigraph lower bound with Corollary~\ref{cor:simple-graph-hardness} completes the proof of \cref{thm:main}.

\begin{proof}[Proof of Corollary~\ref{cor:randomized-lower-bound}]
For the randomized lower bound, apply the construction of \cref{sec:sat} with the sparsity constant $c_{\mathrm r}$ from Proposition~~\ref{prop:sparse-sat-rETH} in place of $c$, and choose $C>3c_{\mathrm r}+4$. All reductions used in the proof of \cref{thm:main} are deterministic. Hence, a bounded-error randomized algorithm for
Global Label Min-Cut with running time
\[
  (np)^{o(\log n)}\operatorname{poly}(|E|)
\]
would, when composed with these reductions, yield a bounded-error randomized $2^{o(N)}$-time algorithm for sparse 3-SAT. This contradicts Proposition~\ref{prop:sparse-sat-rETH}.
\end{proof}

\section{Conclusion}\label{sec:conclusion}
We give a deterministic fine-grained reduction from sparse $3$-SAT to \GLMC{} that eliminates both $\log\log n$ losses in the previously known conditional lower bound. The proof separates into three layers: exact-one selection validity, clique consistency, and logical composition. The nonedge verifier handles clique consistency only under the exact-one promise, while the pointed partition disjunctive composition combines the two rejection conditions without losing the join structure.

Under ETH, we conclude that \GLMC{} admits no deterministic algorithm with running time
\[
    (np)^{o(\log n)}\operatorname{poly}(|E|),
\]
even on simple edge-labeled graphs. Since all reductions are deterministic, the same lower bound applies to bounded-error randomized algorithms under rETH. On the hard family, \(p=n^{O(1)}\); hence, in the randomized setting, the result matches the worst-case exponent order of the known randomized quasi-polynomial exact upper bound, up to constant factors.

\end{document}